\renewcommand{\vec}[1]{\underline{#1}}
\newtheorem{thm}{Theorem}
\newtheorem{lemma}[thm]{Lemma}
\newtheorem{cor}[thm]{Corollary}
\theoremstyle{definition}
\newtheorem{rem}[thm]{Remark}
\newtheorem{exam}[thm]{Example}
\newcommand{\lf}[2]{\left(\frac{#1}{#2}\right)}
\newcommand{\tr}{{\rm Tr}}
\newcommand{\gf}{{\mathbb F}}
\begin{document}




\title{Weight Distribution of a Class of Cyclic Codes with Arbitrary Number of Zeros}

\author{Jing Yang,\thanks{J. Yang is with the Department of Mathematical Sciences, Tsinghua University,
Beijing, 100084, China (email: jingyang@math.tsinghua.edu.cn).}
Maosheng Xiong \thanks{M. Xiong is with the Department of Mathematics, The Hong Kong University of Science and Technology, Clear Water Bay, Kowloon, Hong Kong (email: mamsxiong@ust.hk).}
and
Cunsheng Ding\thanks{C. Ding is with the Department of Computer Science and Engineering, The Hong Kong University of Science and Technology, Clear Water Bay, Kowloon, Hong Kong, China (email: cding@ust.hk).}
}

\date{}


\maketitle


\renewcommand{\thefootnote}{}





\begin{abstract}
Cyclic codes have been widely used in digital communication systems and consume electronics as they have efficient encoding and decoding algorithms. The weight distribution of cyclic codes has been an important topic of study for many years. It is in general hard to determine the weight distribution of linear codes. In this paper, a class of cyclic codes with any number of zeros are described and their weight distributions are determined.
\end{abstract}

\begin{keywords}
Cyclic codes, Gaussian periods, linear codes, weight distribution.
\end{keywords}

\section{Introduction}\label{sec-into}

Throughout this paper, let $p$ be a prime, $q=p^s$, $r=q^m$ for some integers $s,m\geqslant 1$. Let $\mathbb{F}_r$ be a finite field of order $r$ and $\gamma$ be a generator of the multiplicative group $\mathbb{F}_{r}^*:=\mathbb{F}_r \setminus \{0\}$. An $[n,\kappa,d]$-linear code $\mathcal{C}$ over $\mathbb{F}_q$ is a $\kappa$-dimensional subspace of $\mathbb{F}_{q}^n$ with minimum (Hamming) distance $d$. It is called cyclic if any $(c_0,c_1,\cdots ,c_{n-1})\in \mathcal{C}$ implies $(c_{n-1},c_0,\cdots,c_{n-2})\in \mathcal{C}$.

Consider the one-to-one linear map defined by
$$\begin{array}{cccl}
\sigma:& \mathcal{C}&\rightarrow &R=\mathbb{F}_{q}[x]/(x^n-1)\\
 &(c_0,c_1,\cdots ,c_{n-1})&\mapsto&c_0+c_1x+\cdots +c_{n-1}x^{n-1}.
\end{array}$$
Then $\mathcal{C}$ is a cyclic code if and only if $\sigma(\mathcal{C})$ is an ideal of the ring $R$.
Since $R$ is a principal ideal ring, there exists a unique monic polynomial $g(x)$ with least degree satisfying $\sigma(\mathcal{C})=g(x)R$ and $g(x)\mid (x^n-1)$. Then $g(x)$ is called the \textit{generator polynomial} of $\mathcal{C}$ and $h(x)=(x^n-1)/g(x)$ is called the \textit{parity-check polynomial} of $\mathcal{C}$. If $h(x)$ has $t$ irreducible factors over $\mathbb{F}_{q}$, we say for simplicity such a cyclic code $\mathcal{C}$ to \textit{have $t$ zeros}. (In the literature some authors call $\mathcal{C}$ ``the dual of a cyclic code with $t$ zeros''.)

Denote by $A_i$ the number of codewords with Hamming weight $i$ in $\mathcal{C}$. The {\em weight enumerator} of $\mathcal{C}$ with length $n$ is defined by
$$1+A_1z+A_2z^2+ \cdots + A_nz^n.$$
The sequence $(A_0,A_1,\cdots ,A_n)$ is called the \textit{weight distribution} of $\mathcal{C}$. The study of the weight distribution of a linear code is important in both theory and application due to the following:
\begin{itemize}
  \item The weight distribution of a code gives the minimum distance and thus the error correcting capability of the code.
  \item The weight distribution of a code allows the computation of the error probability of error detection and correction
            with respect to some algorithms \cite{Klov}.
\end{itemize}
The problem of determining the weight distribution of linear codes is in general very difficult and remains open
for most linear codes. For only a few special classes the weight distribution is known. For example, the weight
distribution of some irreducible cyclic codes is known (\cite{AL06,BM72,BM73,McE74,D-Y12,Rao10}).  For cyclic codes with two zeros the weight distribution is known in some special cases (\cite{Ding2,Ding1,F-M12,Tang12,Vega12,Xiong1,Xiong2}). The weight distribution is also known for some other linear and cyclic codes (\cite{B-M10,FL08,Feng12,LF08,LTW,McG,M-R07,Mois07,Schoof,Y-C-D06,Zeng10}).

The objectives of this paper are to describe a new class of cyclic codes with arbitrary number of zeros and to determine their
weight distributions. This paper is organized as follows. Section \ref{sec-codess} defines this class of cyclic codes.
Section \ref{sec-pre} introduces some mathematical tools such as group characters, cyclotomy and Gaussian periods that will be needed later in this
paper.  Section \ref{sec-main} deals with the weight distribution of the class of cyclic codes under special conditions.
Section \ref{sec-conclusion} concludes this paper.

\section{The class of cyclic codes}\label{sec-codess}

From now on, we make the following  assumptions for the rest of this paper.

\vspace{.1cm}
\noindent
{\bf The Main Assumptions:}
\emph{Let $r=q^m=p^{sm}$ be a prime power for some positive integers $s,m$ and let $e \geqslant t \geqslant 2$. Assume that}
\begin{itemize}
\item[ i)]  \emph{$a \not \equiv 0 \pmod{r-1} \mbox{ and } e|(r-1)$;}

\item[ ii)] \emph{$a_i \equiv a+\frac{r-1}{e}\Delta_i \pmod{r-1},\, 1\leqslant i \leqslant t$, where $\Delta_i \not \equiv \Delta_j \pmod{e}$ for any $ i \ne j$ and \\ $\gcd(\Delta_2-\Delta_1,\ldots,\Delta_t-\Delta_1,e)=1$;}

\item[ iii)] \emph{$
\deg h_{a_1}(x)=\cdots=\deg h_{a_t}(x)=m, \mbox{ and } h_{a_i}(x) \neq h_{a_j}(x)$ for any $1\leqslant i\neq j\leqslant t$, where $h_{a}(x)$ is the minimal polynomial of $\gamma^{-a}$ over $\mathbb{F}_q$.} $\quad\hfill \blacksquare$


\end{itemize}

We remark that Condition iii) can be met by a simple criterion stated in Lemma \ref{lem-hi}. From what follows, define
\[\delta=\gcd(r-1,a_1,a_2,\cdots ,a_{t}), \quad n=\frac{r-1}{\delta} \]
and
\[ N=\gcd \left(\frac{r-1}{q-1},a e\right).\]
It is easy to verify that
\[e \delta \mid N(q-1). \]
The class of cyclic codes considered in this paper is defined by
\begin{equation}\label{def}
\mathcal{C}=\left\{ c(x_1,x_2,\cdots,x_{t})=\left(Tr_{r/q}\left(\sum_{j=1}^t x_j \gamma^{a_ji}  \right)\right)_{i=0}^{n-1}~:~x_1,\cdots,x_{t}\in\mathbb{F}_{r} \right\},\end{equation}
where $Tr_{r/q}$ denotes the trace map from $\mathbb{F}_{r}$ to $\mathbb{F}_{q}$.
It follows from Delsarte's Theorem \cite{Delsarte} that the code $\mathcal{C}$ is an $[n,tm]$ cyclic code over $\mathbb{F}_{q}$ with parity-check
polynomial $h(x)=h_{a_1}(x)\cdots h_{a_{t}}(x)$. This code $\mathcal{C}$ may contain many cyclic codes studied in the literature as special cases. In particular, when $t=2$, $a_0=\frac{q-1}{h},a_1=\frac{q-1}{h}+\frac{r-1}{e}$ for positive integers $e,h$ such that $e|h$ and $h|(q-1)$, the code $\mathcal{C}$ has been studied in \cite{Ding1,Ding2,Tang12,Xiong1,Xiong2,F-M12}.

In the definition of $\mathcal{C}$ we choose integers $a_1,a_2,\cdots ,a_{t}$ from a set of arithmetic
sequence with common difference $\frac{r-1}{e}$ modulo $r-1$. This choice of these $a_i$'s allows us to compute the weight distribution
of the code $\mathcal{C}$. If the integers $a_i$ are not chosen in this way, it might be difficult to find the weight
distribution. The conditions in the Main Assumptions are to guarantee that the dimension of $\mathcal{C}$ is equal to $mt$.

\section{Group characters, cyclotomy and Gaussian periods}\label{sec-pre}

Let $\tr_{r/p}$ denote the trace function from $\mathbb{F}_{r}$ to $\mathbb{F}_{p}$.
An {\em additive character} of $\mathbb{F}_{r}$ is a nonzero function $\psi$
from $\mathbb{F}_{r}$ to the set of complex numbers such that
$\psi(x+y)=\psi(x) \psi(y)$ for any pair $(x, y) \in \mathbb{F}_{r}^2$.
For each $b\in \mathbb{F}_{r}$, the function
\begin{eqnarray}\label{dfn-add}
 \psi_b(c)=e^{2\pi \sqrt{-1} \tr_{r/p}(bc)/p} \ \ \mbox{ for all }
c\in\mathbb{F}_{r}
\end{eqnarray}
defines an additive character of $\mathbb{F}_{r}$. When $b=0$,
$\psi_0(c)=1 \mbox{ for all } c\in\mathbb{F}_{r},$
and is called the {\em trivial additive character} of
$\mathbb{F}_{r}$. When $b=1$, the character $\psi_1$ in (\ref{dfn-add}) is called the
{\em canonical additive character} of $\mathbb{F}_{r}$. For any $x\in \mathbb{F}_{r}$, one can easily check the following orthogonal property of additive characters, which we need in the sequel,
\begin{equation}\label{add-orth}
    \frac{1}{r}\sum\limits_{x\in \mathbb{F}_{r}}\psi(ax)=\left\{
      \begin{array}{ll}
        1, & \hbox{if $a=0$;} \\
        0, & \hbox{if $a\in \mathbb{F}^*_{r}$.}
      \end{array}
    \right.
\end{equation}

Let $r-1=l L$ for two positive integers $l \geqslant 1$ and $L\geqslant 1$, and let
$\gamma$ be a fixed primitive element of $\mathbb{F}_{r}$.
Define $C_{i}^{(L,r)}=\gamma^i \langle \gamma^{L} \rangle$ for $i=0,1,...,L-1$, where
$\langle \gamma^{L} \rangle$ denotes the
subgroup of $\mathbb{F}_{r}^*$ generated by $\gamma^{L}$. The cosets $C_{i}^{(L,r)}$ are
called the {\em cyclotomic classes} of order $L$ in $\mathbb{F}_{r}$.
The {\em cyclotomic numbers} of order $L$ are
defined by
\begin{eqnarray*}
(i, j)^{(L,r)}=\left|(C_{i}^{(L,r)}+1) \cap C_{j}^{(L,r)}\right|
\end{eqnarray*}
for all $0 \leqslant i,j \leqslant  L-1$.

Cyclotomic numbers of order 2 are given in the following lemma \cite{B-E-W} and will be needed in the sequel.

\begin{lemma}\label{lem-cycNo-N=2}
The cyclotomic numbers of order 2 are given by
\begin{itemize}
  \item $(0,0)^{(2,r)}=\frac{(r-5)}{4};~(0,1)^{(2,r)}=(1,0)^{(2,r)}=(1,1)^{(2,r)}=\frac{(r-1)}{4}$ if $r\equiv 1\pmod{4}$; and
  \item $(0,0)^{(2,r)}=(1,0)^{(2,r)}=(1,1)^{(2,r)}=\frac{(r-3)}{4};~(0,1)^{(2,r)}=\frac{(r+1)}{4}$ if $r\equiv 3\pmod{4}$.
\end{itemize}
\end{lemma}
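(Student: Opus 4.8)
The plan is to evaluate the four cyclotomic numbers of order $2$ directly through the quadratic character of $\mathbb{F}_r$, reducing each to a short list of standard character sums. Let $\eta$ be the quadratic multiplicative character of $\mathbb{F}_r$, so that $\eta(x)=1$ when $x$ is a nonzero square, $\eta(x)=-1$ when $x$ is a nonsquare, and $\eta(0)=0$; then $C_0^{(2,r)}=\{x:\eta(x)=1\}$ and $C_1^{(2,r)}=\{x:\eta(x)=-1\}$. Since membership $x\in C_i^{(2,r)}$ amounts to $\tfrac{1+(-1)^i\eta(x)}{2}=1$ together with $x\neq 0$, I would write each cyclotomic number $(i,j)^{(2,r)}=|(C_i^{(2,r)}+1)\cap C_j^{(2,r)}|$ as a sum over $x\in\mathbb{F}_r\setminus\{0,-1\}$ of a product of two such indicator factors, the exclusion of $0$ and $-1$ being forced by $x\in\mathbb{F}_r^*$ and $x+1\in\mathbb{F}_r^*$.

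Expanding the product, for instance
$$(0,0)^{(2,r)}=\frac14\sum_{x\neq 0,-1}\bigl(1+\eta(x)\bigr)\bigl(1+\eta(x+1)\bigr),$$
the computation splits into three pieces: the trivial count $\sum_{x\neq 0,-1}1=r-2$; the linear sums $\sum_{x\neq 0,-1}\eta(x)$ and $\sum_{x\neq 0,-1}\eta(x+1)$, which, using $\sum_{x\in\mathbb{F}_r}\eta(x)=0$, collapse to $-\eta(-1)$ and $-1$ respectively; and the quadratic sum $\sum_{x\neq 0,-1}\eta\bigl(x(x+1)\bigr)$. Because $\eta(x(x+1))$ already vanishes at $x=0$ and $x=-1$, this last sum equals $\sum_{x\in\mathbb{F}_r}\eta(x^2+x)$.

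The key step is the evaluation of this quadratic sum. I would invoke the standard formula for $\sum_{x\in\mathbb{F}_r}\eta(ax^2+bx+c)$, which equals $-\eta(a)$ whenever the discriminant $b^2-4ac$ is nonzero; here $a=1$, $b=1$, $c=0$, so the discriminant is $1$ and the sum equals $-1$. Assembling the three pieces gives $(0,0)^{(2,r)}=\tfrac14\bigl(r-4-\eta(-1)\bigr)$, and running the same expansion with the appropriate signs on the indicator factors yields $(0,1)^{(2,r)}=\tfrac14\bigl(r-\eta(-1)\bigr)$ and $(1,0)^{(2,r)}=(1,1)^{(2,r)}=\tfrac14\bigl(r-2+\eta(-1)\bigr)$. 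Finally I would use $\eta(-1)=(-1)^{(r-1)/2}$, which equals $1$ when $r\equiv 1\pmod 4$ and $-1$ when $r\equiv 3\pmod 4$; substituting these two values into the four expressions reproduces exactly both cases of the lemma. The only genuine content is the quadratic character-sum evaluation, and I expect the main obstacle to be purely bookkeeping, namely keeping the four sign patterns straight rather than any conceptual difficulty.
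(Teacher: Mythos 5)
Your proposal is correct, but note that the paper itself gives no proof of this lemma at all: it simply imports the cyclotomic numbers of order $2$ from the literature, citing Berndt--Evans--Williams (\emph{Gauss and Jacobi Sums}). So there is nothing in the paper to compare against step by step; what you have done is supply the standard self-contained derivation that the cited reference encapsulates. Your computation checks out in every detail: the indicator expansion gives
\[
(i,j)^{(2,r)}=\tfrac14\Bigl[(r-2)-(-1)^i\eta(-1)-(-1)^j-(-1)^{i+j}\Bigr],
\]
using $\sum_{x}\eta(x)=0$, the evaluation $\sum_{x}\eta(x^2+x)=-\eta(1)=-1$ (valid since the discriminant $1$ is nonzero; this is Theorem 5.48 of Lidl--Niederreiter, which the paper also cites), and $\eta(-1)=(-1)^{(r-1)/2}$; specializing $(i,j)$ and the two congruence classes of $r$ modulo $4$ reproduces exactly the eight values in the lemma. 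Two small points worth making explicit if you write this up: the argument requires $r$ odd (implicit in the lemma, since cyclotomic classes of order $2$ need $2\mid r-1$), and the quadratic character-sum formula you invoke is itself the only nonelementary ingredient, so you should either cite it or include its short Gauss-sum proof to make the argument fully self-contained.
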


\vspace{.2cm}
The {\em Gaussian periods} of order $L$ are defined by
$$
\eta_i^{(L,r)} =\sum_{x \in C_i^{(L,r)}} \psi(x), \quad i=0,1,..., L-1,
$$
where $\psi$ is the canonical additive character of $\mathbb{F}_{r}$.

The values of the Gaussian periods are in general very hard to compute.
However, they can be computed in a few cases. We will need the following lemmas whose proofs can be found in \cite{B-E-W} and \cite{Myer}.

\vspace{.2cm}
\begin{lemma}\label{lem-degree2}
When $L=2$, the Gaussian periods are given by
\begin{eqnarray*}
\eta_0^{(2,r)}=
\left\{
\begin{array}{ll}
\frac{-1+(-1)^{s\cdot m-1}r^{1/2}}{2}, & \mbox{if $p\equiv 1 \pmod{4}$} \\
\frac{-1+(-1)^{s\cdot m-1}(\sqrt{-1})^{s\cdot m} r^{1/2}}{2}, & \mbox{if $p\equiv 3 \pmod{4}$}
\end{array}
\right.
\end{eqnarray*}
and
$
\eta_1^{(2,r)} = -1 - \eta_0^{(2,r)}.
$
\end{lemma}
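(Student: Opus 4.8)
The plan is to express the two Gaussian periods in terms of the quadratic Gauss sum and then invoke the classical evaluation of that sum. Since $L=2$, the class $C_0^{(2,r)}$ consists of the nonzero squares of $\mathbb{F}_r$ and $C_1^{(2,r)}$ of the nonsquares. Applying the orthogonality relation (\ref{add-orth}) with $a=1$ gives $\sum_{x\in\mathbb{F}_r}\psi(x)=0$, hence
\[ \eta_0^{(2,r)}+\eta_1^{(2,r)}=\sum_{x\in\mathbb{F}_r^*}\psi(x)=-1. \]
Letting $\phi$ denote the quadratic multiplicative character of $\mathbb{F}_r$, so that $\phi\equiv 1$ on $C_0^{(2,r)}$ and $\phi\equiv -1$ on $C_1^{(2,r)}$, the quadratic Gauss sum $G:=\sum_{x\in\mathbb{F}_r^*}\phi(x)\psi(x)$ equals $\eta_0^{(2,r)}-\eta_1^{(2,r)}$. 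Solving this $2\times 2$ linear system yields $\eta_0^{(2,r)}=\tfrac12(G-1)$ and $\eta_1^{(2,r)}=\tfrac12(-G-1)=-1-\eta_0^{(2,r)}$, so the whole lemma reduces to evaluating $G$.

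Next I would evaluate $G$ over $\mathbb{F}_r=\mathbb{F}_{p^{sm}}$ by reduction to the prime field via the Davenport--Hasse lifting relation. One checks that $\phi=\phi_0\circ N_{\mathbb{F}_r/\mathbb{F}_p}$, where $\phi_0$ is the quadratic character of $\mathbb{F}_p$ and $N_{\mathbb{F}_r/\mathbb{F}_p}$ is the norm, since for $x\in\mathbb{F}_r^*$ the element $N_{\mathbb{F}_r/\mathbb{F}_p}(x)$ is a square in $\mathbb{F}_p$ exactly when $x^{(r-1)/2}=1$, i.e.\ exactly when $x$ is a square in $\mathbb{F}_r$; likewise the canonical additive character $\psi$ is the lift of the canonical additive character of $\mathbb{F}_p$ through the trace. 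The Davenport--Hasse theorem then gives $G=(-1)^{sm-1}\,G_0^{\,sm}$, where $G_0$ is the quadratic Gauss sum over $\mathbb{F}_p$.

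Finally, $G_0$ is Gauss's classical quadratic sum over $\mathbb{F}_p$, whose value is $\sqrt p$ when $p\equiv 1\pmod 4$ and $\sqrt{-1}\,\sqrt p$ when $p\equiv 3\pmod 4$. Substituting into $G=(-1)^{sm-1}G_0^{\,sm}$ and using $r^{1/2}=p^{sm/2}$ gives $G=(-1)^{sm-1}r^{1/2}$ in the first case and $G=(-1)^{sm-1}(\sqrt{-1})^{sm}r^{1/2}$ in the second; inserting these into $\eta_0^{(2,r)}=\tfrac12(G-1)$ produces exactly the two displayed formulas, with $\eta_1^{(2,r)}=-1-\eta_0^{(2,r)}$ already established above. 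The main obstacle is the evaluation of the prime-field sum $G_0$: its modulus $\sqrt p$ follows easily from $G_0\overline{G_0}=p$, but determining its exact argument (the ``sign of the Gauss sum'') is the subtle classical result of Gauss, and this is the single ingredient I would quote rather than reprove, as is done in \cite{B-E-W} and \cite{Myer}.
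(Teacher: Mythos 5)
Your proof is correct: the linear system $\eta_0^{(2,r)}+\eta_1^{(2,r)}=-1$, $\eta_0^{(2,r)}-\eta_1^{(2,r)}=G$ reduces the lemma to evaluating the quadratic Gauss sum $G$ over $\mathbb{F}_r$, and the Davenport--Hasse lift $G=(-1)^{sm-1}G_0^{\,sm}$ combined with Gauss's determination of the sign of $G_0$ over $\mathbb{F}_p$ yields exactly the displayed formulas. The paper offers no proof of this lemma, deferring entirely to \cite{B-E-W} and \cite{Myer}, and your argument is precisely the standard one given in those references, so the two approaches coincide.
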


\vspace{.2cm}
\begin{lemma}\label{lem-period2}
Let $L=3$. If $p \equiv 1 \pmod{3}$, and $s m \equiv 0 \pmod{3}$, then
\begin{equation*}\left\{
                 \begin{array}{l}
                   \eta_0^{(3,r)}=\frac{-1-c_1r^{1/3}}{3} \\
                   \eta_1^{(3,r)}=\frac{-1+\frac{1}{2}(c_1+9d_1)r^{1/3}}{3} \\
                   \eta_2^{(3,r)}=\frac{-1+\frac{1}{2}(c_1-9d_1)r^{1/3}}{3}.
                 \end{array}
               \right.\end{equation*}
where $c_1$ and $d_1$ are given by $4p^{s\cdot m/3}=c_1^2+27d_1^2$, $c_1 \equiv 1 \pmod{3}$ and
$\gcd(c_1,p)=1$.
\end{lemma}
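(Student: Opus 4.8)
\emph{Proof plan.} The plan is to compute the three order-$3$ Gaussian periods by relating them to a single cubic Gauss sum, and then to evaluate that Gauss sum by descending to the subfield $\mathbb{F}_{r^{1/3}}$, which exists precisely because $3\mid sm$.

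First I would fix a multiplicative character $\chi$ of $\mathbb{F}_r^*$ of order $3$, normalised so that $\omega:=\chi(\gamma)$ is a fixed primitive cube root of unity, and set $G(\chi^a)=\sum_{x\in\mathbb{F}_r^*}\chi^a(x)\psi(x)$, with $G(\chi^0)=-1$ by the orthogonality relation~(\ref{add-orth}). Writing the indicator of the class $C_i^{(3,r)}=\{x:\chi(x)=\omega^i\}$ as $\tfrac13\sum_{a=0}^2\omega^{-ai}\chi^a(x)$ and summing against $\psi$ gives the Fourier-inversion identity
$$\eta_i^{(3,r)}=\frac13\sum_{a=0}^{2}\overline{\chi^a(\gamma^i)}\,G(\chi^a) =\frac13\Bigl(-1+\omega^{-i}\,G(\chi)+\omega^{-2i}\,\overline{G(\chi)}\Bigr),\qquad i=0,1,2,$$
where I have used $G(\chi^2)=\overline{G(\chi)}$; this last equality follows from $\overline{G(\chi)}=\chi(-1)G(\overline\chi)$ together with $\chi(-1)=1$, which holds because $\chi(-1)$ is at once a square root and a cube root of unity. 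Everything is thereby reduced to evaluating the single sum $G(\chi)$.

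Next I would use $3\mid sm$. As $p\equiv1\pmod3$ gives $r^{1/3}=p^{sm/3}\equiv1\pmod3$, the subfield $\mathbb{F}_{r^{1/3}}$ carries a cubic character $\chi_0$, and since the norm $\mathrm{N}\colon\mathbb{F}_r^*\to\mathbb{F}_{r^{1/3}}^*$ induces a bijection on cubic characters I may take $\chi=\chi_0\circ\mathrm{N}$. The Davenport--Hasse lifting relation then yields $G(\chi)=(-1)^{3-1}G_0(\chi_0)^3=G_0(\chi_0)^3$, where $G_0$ denotes the Gauss sum over $\mathbb{F}_{r^{1/3}}$. For the latter I would invoke the classical identities $G_0(\chi_0)G_0(\overline{\chi_0})=r^{1/3}$ and $G_0(\chi_0)^2=J(\chi_0,\chi_0)G_0(\overline{\chi_0})$, which combine to
$$G(\chi)=G_0(\chi_0)^3=r^{1/3}\,J(\chi_0,\chi_0).$$
The cubic Jacobi sum $J(\chi_0,\chi_0)$ lies in $\mathbb{Z}[\omega]$, has $|J(\chi_0,\chi_0)|^2=r^{1/3}$, and satisfies $J(\chi_0,\chi_0)\equiv-1\pmod3$; writing $2J(\chi_0,\chi_0)=-c_1+3d_1\sqrt{-3}$ turns the norm condition into exactly $4r^{1/3}=c_1^2+27d_1^2$, while the congruence pins down $c_1\equiv1\pmod3$ and $\gcd(c_1,p)=1$.

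Finally I would substitute $G(\chi)=\tfrac12 r^{1/3}\bigl(-c_1+3d_1\sqrt{-3}\bigr)$ and $\overline{G(\chi)}=\tfrac12 r^{1/3}\bigl(-c_1-3d_1\sqrt{-3}\bigr)$ into the inversion identity and expand the three combinations $-1+\omega^{-i}G(\chi)+\omega^{-2i}\overline{G(\chi)}$ using $\omega=\tfrac12(-1+\sqrt{-3})$: for $i=0$ the $\sqrt{-3}$ parts cancel and leave $-c_1r^{1/3}$, while for $i=1,2$ they collapse to $\tfrac12(c_1\pm9d_1)r^{1/3}$, and dividing by $3$ produces the three displayed formulas. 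The main obstacle is the arithmetic of this last stage, and above all the \emph{sign and normalisation bookkeeping}: one must track the choice between $\chi$ and $\overline\chi$, the labelling of the classes $C_i^{(3,r)}$, and the branch fixed by $c_1\equiv1\pmod3$ consistently, so that $\eta_0$ acquires $-c_1r^{1/3}$ rather than $+c_1r^{1/3}$. This delicate normalisation is precisely what is carried out in \cite{B-E-W} and \cite{Myer}.
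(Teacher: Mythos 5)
Your route is the standard one, and in fact the paper itself contains no proof of this lemma at all: it is imported verbatim, with a pointer to \cite{B-E-W} and \cite{Myer}. Measured against that standard argument, the skeleton of your sketch is sound: the inversion identity $\eta_i^{(3,r)}=\frac{1}{3}\bigl(-1+\omega^{-i}G(\chi)+\omega^{-2i}\overline{G(\chi)}\bigr)$, the fact that every cubic character of $\mathbb{F}_r$ is a norm lift (legitimate precisely because $3\mid sm$ and $r^{1/3}=p^{sm/3}\equiv 1\pmod{3}$), the Davenport--Hasse relation $G(\chi)=G_0(\chi_0)^3$, and the Gauss--Jacobi evaluation $G_0(\chi_0)^3=r^{1/3}J(\chi_0,\chi_0)$ are all correct.

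The gap is your last normalisation step, and it is not cosmetic. From $J(\chi_0,\chi_0)\equiv -1\pmod{3}$ and your parametrisation $2J(\chi_0,\chi_0)=-c_1+3d_1\sqrt{-3}$, reducing modulo $3$ gives $-c_1\equiv 2J\equiv -2\equiv 1\pmod{3}$, hence $c_1\equiv 2\pmod{3}$, \emph{not} $c_1\equiv 1\pmod{3}$ as you assert. If you instead write $2J(\chi_0,\chi_0)=c_1+3d_1\sqrt{-3}$, the congruence does force $c_1\equiv 1\pmod{3}$, but then your own expansion produces $\eta_0=\frac{-1+c_1r^{1/3}}{3}$ and $\eta_{1,2}=\frac{-1-\frac{1}{2}(c_1\pm 9d_1)r^{1/3}}{3}$, i.e.\ the opposite signs to the ones displayed in the lemma. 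No choice of character, class labelling, or branch can reconcile the two.

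The reason the reconciliation is impossible is that the lemma as printed is false; it carries a sign typo relative to the sources, and your erroneous congruence step is exactly what makes your conclusion appear to match it. Concretely, take $r=7^3$, so $r^{1/3}=7$: the conditions $4\cdot 7=c_1^2+27d_1^2$, $c_1\equiv 1\pmod{3}$, $\gcd(c_1,7)=1$ force $(c_1,d_1)=(1,\pm 1)$, and the displayed formula would give $\eta_0^{(3,343)}=-\frac{8}{3}$, which is not even an algebraic integer, whereas every Gaussian period is a sum of roots of unity and hence is one. The true value is $\eta_0^{(3,343)}=2=\frac{-1+1\cdot r^{1/3}}{3}$: for either nontrivial cubic character of $\mathbb{F}_7$ one has $J(\chi_0,\chi_0)\in\{-1-3\omega,\ 2+3\omega\}$, so $G(\chi)+\overline{G(\chi)}=7\cdot 2\,\mathrm{Re}\,J(\chi_0,\chi_0)=7$. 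So what your outline actually proves, once the congruence is done honestly, is the sign-corrected statement $\eta_0=\frac{-1+c_1r^{1/3}}{3}$, $\eta_{1,2}=\frac{-1-\frac{1}{2}(c_1\pm 9d_1)r^{1/3}}{3}$ with $c_1\equiv 1\pmod{3}$ (equivalently, the printed formulas with $c_1\equiv -1\pmod{3}$); it cannot prove the lemma as stated, and a correct write-up should flag this discrepancy rather than paper over it with the false deduction ``$J\equiv -1\pmod{3}$ pins down $c_1\equiv 1\pmod{3}$.''
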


In a special case, the so-called \textit{semiprimitive case}, the Gaussian periods are known and are described in the
following lemma \cite{BM72,Myer}.

\vspace{.2cm}
\begin{lemma}\label{lem-semip}
Assume that $L>2$ and there exists a positive integer $j$ such that $p^j \equiv -1 \pmod{L}$, and
the $j$ is the least such. Let $r=p^{2jv}$ for some integer $v$.

(a) If $v$, $p$ and $(p^j+1)/L$ are all odd, then
\begin{eqnarray*}
\begin{array}{l}
\eta_{L/2}^{(L,r)}=\frac{(L-1)\sqrt{r}-1}{L}, \ \
\eta_{k}^{(L,r)}=-\frac{\sqrt{r}+1}{L}  \mbox{ for } k \ne L/2.
\end{array}
\end{eqnarray*}

(b) In all other cases,
\begin{eqnarray*}
\begin{array}{l}
\eta_{0}^{(L,r)}=\frac{(-1)^{v+1}(L-1)\sqrt{r}-1}{L}, \ \
\eta_{k}^{(L,r)}=\frac{(-1)^v \sqrt{r}-1}{L} \mbox{ for } k \ne 0.
\end{array}
\end{eqnarray*}
\end{lemma}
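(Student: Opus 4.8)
The plan is to reduce the evaluation of the Gaussian periods to the evaluation of Gauss sums, and then to exploit the semiprimitivity hypothesis $p^j \equiv -1 \pmod{L}$ to pin those Gauss sums down exactly. First I would introduce, for each multiplicative character $\chi$ of $\mathbb{F}_r$, the Gauss sum $G(\chi)=\sum_{x\in\mathbb{F}_r^*}\chi(x)\psi(x)$, and write the indicator function of the cyclotomic class $C_k^{(L,r)}$ in terms of the $L$ multiplicative characters that are trivial on $\langle\gamma^L\rangle$. Orthogonality of characters then gives
$$\eta_k^{(L,r)} = \frac{1}{L}\sum_{\chi^L=\chi_0}\overline{\chi(\gamma^k)}\,G(\chi) = -\frac{1}{L} + \frac{1}{L}\sum_{\substack{\chi^L=\chi_0 \\ \chi\neq\chi_0}}\overline{\chi(\gamma^k)}\,G(\chi),$$
where the trivial character contributes $G(\chi_0)=-1$. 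Everything thus reduces to computing $G(\chi)$ for the nontrivial characters $\chi$ whose order $d$ divides $L$.

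Next I would evaluate these Gauss sums. For each nontrivial $\chi$ of order $d\mid L$, the hypothesis forces $p^j\equiv -1\pmod d$, hence $\chi^{p^j}=\overline{\chi}$. Since $\psi$ is the canonical additive character, the Frobenius substitution $x\mapsto x^p$ preserves the trace and so $G(\chi^p)=G(\chi)$; iterating gives $G(\overline{\chi})=G(\chi^{p^j})=G(\chi)$. Combining this with the standard identity $\overline{G(\chi)}=\chi(-1)G(\overline{\chi})$ and with $|G(\chi)|^2=r$ yields $G(\chi)^2=\chi(-1)\,r$, so $G(\chi)=\pm\sqrt{\chi(-1)\,r}$; only a sign remains undetermined. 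To fix it I would apply the Davenport--Hasse lifting relation $G_{\mathbb{F}_{p^{2jv}}}(\chi\circ N)=(-1)^{v-1}\,G_{\mathbb{F}_{p^{2j}}}(\chi)^v$, reducing the problem to the base field $\mathbb{F}_{p^{2j}}$, where again $\chi^{p^j}=\overline{\chi}$ and an explicit computation (Stickelberger's congruence on the $p$-adic valuation of $G(\chi)$, or comparison with a quadratic Gauss sum) gives $G(\chi)=\pm p^j$ with the sign governed by the parity of $(p^j+1)/d$ and of $p$.

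I would then substitute these values back into the displayed expression for $\eta_k^{(L,r)}$ and simplify, using that $\sum_{\chi\neq\chi_0}\overline{\chi(\gamma^k)}$ equals $L-1$ when $\gamma^k\in\langle\gamma^L\rangle$ and equals $-1$ otherwise. In the generic subcase (b) every nontrivial Gauss sum carries the common sign and collapses to $G(\chi)=(-1)^{v+1}\sqrt r$, producing $\eta_0^{(L,r)}=((-1)^{v+1}(L-1)\sqrt r-1)/L$ and the uniform value $\eta_k^{(L,r)}=((-1)^v\sqrt r-1)/L$ for $k\neq 0$. In the exceptional subcase (a), where $v$, $p$ and $(p^j+1)/L$ are all odd, the quadratic character (of order exactly $2$) is the one whose Gauss sum sign differs from the rest; the factor $\overline{\chi(\gamma^k)}$ then singles out the class $k=L/2$, giving $\eta_{L/2}^{(L,r)}=((L-1)\sqrt r-1)/L$ and $\eta_k^{(L,r)}=-(\sqrt r+1)/L$ for $k\neq L/2$.

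The hard part will be the sign bookkeeping in the base case: establishing $G(\chi)=\pm p^j$ over $\mathbb{F}_{p^{2j}}$ with the correct sign, and then tracking how the Davenport--Hasse factor $(-1)^{v-1}$, the parity of $v$, the parity of $p$, and the parity of $(p^j+1)/L$ combine. Precisely this interaction is what separates subcase (a) from subcase (b), and the delicate point is to keep the analysis consistent across characters of \emph{every} divisor order $d\mid L$, not merely those of order exactly $L$, while simultaneously handling the value $\chi(-1)$ that controls whether $G(\chi)$ is real.
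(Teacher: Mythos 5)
The paper never proves Lemma \ref{lem-semip}; it is quoted directly from Baumert--McEliece \cite{BM72} and Myerson \cite{Myer}. So the comparison must be with the standard literature argument, and your framework is exactly that argument: expand $\eta_k^{(L,r)}$ over the $L$ characters with $\chi^L=\chi_0$ by orthogonality, use $G(\chi_0)=-1$, derive $G(\chi)^2=\chi(-1)r$ from $\chi^{p^j}=\overline{\chi}$ and $G(\chi^p)=G(\chi)$ (and indeed $\chi(-1)=1$ throughout this setting, since $(r-1)/d$ is even when $p$ is odd and $-1=1$ when $p=2$), then fix the sign by Davenport--Hasse plus a base-field Stickelberger-type evaluation. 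Your treatment of case (b), including the check that all nontrivial characters then share the common sign $(-1)^{v+1}$, is correct.

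The genuine error is your mechanism for case (a): it is false that ``the quadratic character is the one whose Gauss sum sign differs from the rest.'' When $p$, $v$, $(p^j+1)/L$ are all odd, the sign of $G(\chi)$ for $\chi$ of order $d\mid L$ is $(-1)^{(p^{j_d}+1)/d}$, where $j_d$ is minimal with $p^{j_d}\equiv -1\pmod{d}$; since $j/j_d$ is odd and $p$ is odd, $(p^j+1)/(p^{j_d}+1)$ is odd, so this sign is $-1$ precisely when $d$ has the same $2$-adic valuation as $L$, equivalently precisely when $\chi(\gamma^{L/2})=-1$. Thus the flipped sign is carried by \emph{every} character of exact order $L$, and by many others; the quadratic character is among them only when $L\equiv 2\pmod{4}$. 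For $L=4$ the two quartic characters have sign $-1$ while the quadratic one has sign $+1$ --- the opposite of your claim; for $L=6$ the orders $6$ and $2$ give $-1$ while order $3$ gives $+1$. Moreover, your version cannot produce the stated answer even formally: flipping only the quadratic character changes $\sum_{\chi\neq\chi_0}\overline{\chi(\gamma^k)}G(\chi)$ by a term $\pm 2(-1)^k\sqrt r$, which depends only on the parity of $k$ and so can never isolate the single class $k=L/2$ (for $L=6$ it yields three distinct period values $\frac{-1+3\epsilon\sqrt r}{6}$, $\frac{-1+\epsilon\sqrt r}{6}$, $\frac{-1-3\epsilon\sqrt r}{6}$, contradicting the two-valued conclusion of the lemma). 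The correct finish is to note that the sign equals $\chi(\gamma^{L/2})$ for every nontrivial $\chi$ with $\chi^L=\chi_0$, so that $\sum_{\chi\neq\chi_0}\overline{\chi(\gamma^k)}\,G(\chi)=\sqrt r\sum_{\chi\neq\chi_0}\chi(\gamma^{L/2-k})$, which equals $(L-1)\sqrt r$ if $k\equiv L/2\pmod{L}$ and $-\sqrt r$ otherwise; this is what singles out the class $k=L/2$ and gives the two values in part (a).
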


In another special case,  the so-called \textit{quadratic residue (or index 2) case}, the Gaussian periods can be also computed.
The results below are from \cite{BM73} or \cite{D-Y12}.

\begin{lemma}\label{Index2}
Let $3\neq L\equiv 3\pmod{4}$ be a prime, $p$ be a quadratic residue modulo $L$
and $\frac{L-1}{2}\cdot k=sm$ for some positive integer $k$. Let $h_L$ be the ideal class number of $\mathbb{Q}(\sqrt{-L})$ and $a,b$ be integers satisfying
\begin{equation}\label{ab-condition}
    \left\{
      \begin{array}{l}
        a^2+Lb^2=4p^{h_L}\\
        a\equiv -2p^{\frac{L-1+2h_L}{4}}\pmod{L}\\
        b>0,p\nmid b.
      \end{array}
    \right.
\end{equation}
Then, the Gaussian periods of order $L$ are given by
\begin{equation}\label{index2-period}
    \left\{
       \begin{array}{ll}
         \eta_0^{(L,r)}=\frac{1}{L}(P^{(k)}A^{(k)}(L-1)-1) &  \\
         \eta_u^{(L,r)}=\eta_1=\frac{-1}{L}(P^{(k)}A^{(k)}+P^{(k)}B^{(k)}L+1), & \hbox{if $\lf{u}{L}=1$} \\
         \eta_u^{(L,r)}=\eta_{-1}=\frac{-1}{L}(P^{(k)}A^{(k)}-P^{(k)}B^{(k)}L+1), & \hbox{if $\lf{u}{L}=-1$,}
       \end{array}
     \right.
\end{equation}
where
\begin{equation}\label{PAB}
    \left\{
      \begin{array}{l}
      P^{(k)}=(-1)^{k-1}p^{\frac{k}{4}(L-1-2h_L)} \\
      A^{(k)}=\mathrm{Re}(\frac{a+b\sqrt{-L}}{2})^k\\
      B^{(k)}=\mathrm{Im}(\frac{a+b\sqrt{-L}}{2})^k\big/\sqrt{L}.
      \end{array}
    \right.
\end{equation}
\end{lemma}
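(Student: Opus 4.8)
The plan is to pass from Gaussian periods to Gauss sums and then to evaluate the Gauss sums by exploiting the index-$2$ hypothesis. Fix a multiplicative character $\chi$ of $\mathbb{F}_r^*$ of order $L$ with $\chi(\gamma)=\zeta_L:=e^{2\pi\sqrt{-1}/L}$, and for $0\leqslant j\leqslant L-1$ write $G(\chi^j)=\sum_{x\in\mathbb{F}_r^*}\chi^j(x)\psi(x)$ for the associated Gauss sums, where $\psi$ is the canonical additive character of $\mathbb{F}_r$. Detecting membership in the cyclotomic class $C_u^{(L,r)}$ via the orthogonality of multiplicative characters gives the standard identity
\[
\eta_u^{(L,r)}=\frac{1}{L}\sum_{j=0}^{L-1}\chi^{-j}(\gamma^u)\,G(\chi^j)=\frac{-1}{L}+\frac{1}{L}\sum_{j=1}^{L-1}\zeta_L^{-ju}\,G(\chi^j),
\]
using $G(\chi^0)=-1$. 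So the whole computation reduces to evaluating the $L-1$ nontrivial Gauss sums.

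The next step uses that $p$ is a quadratic residue modulo $L$. Since $\psi$ is built from the $\mathbb{F}_p$-trace, one has $G(\chi^p)=G(\chi)$, so $G(\chi^j)$ depends only on the coset of $j$ modulo $\langle p\rangle$. Under the index-$2$ hypothesis $\langle p\rangle$ is exactly the group of squares, so $G(\chi^j)$ takes only two values: $G:=G(\chi)$ when $\lf{j}{L}=1$, and, because $L\equiv 3\pmod 4$ makes $-1$ a nonsquare so that complex conjugation interchanges the two classes, $\overline{G}$ when $\lf{j}{L}=-1$. Splitting the sum accordingly, the two partial exponential sums are governed by $\sum_{j=1}^{L-1}\zeta_L^{-ju}=-1$ (for $u\not\equiv 0$) and by the quadratic Gauss sum $\sum_{j=1}^{L-1}\lf{j}{L}\zeta_L^{j}=\sqrt{-L}$; for $u\equiv 0$ both partial sums equal $(L-1)/2$. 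Substituting yields $\eta_0^{(L,r)}=\tfrac{-1}{L}+\tfrac{L-1}{L}\,\mathrm{Re}(G)$ and, for $u$ with $\lf{u}{L}=\pm 1$, an expression in $\mathrm{Re}(G)$ and $\sqrt{L}\,\mathrm{Im}(G)$. Matching these against the stated formulas shows that it only remains to prove $G=P^{(k)}\big(\tfrac{a+b\sqrt{-L}}{2}\big)^{k}$, i.e. $\mathrm{Re}(G)=P^{(k)}A^{(k)}$ and $\mathrm{Im}(G)=P^{(k)}\sqrt{L}\,B^{(k)}$.

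The remaining and genuinely hard step is the explicit evaluation of $G$, the classical index-$2$ Gauss-sum problem. Since the order of $p$ modulo $L$ is $(L-1)/2$ and $sm=\tfrac{L-1}{2}k$, I would first treat the base field $\mathbb{F}_{p^{(L-1)/2}}$, where the order-$L$ Gauss sum $G_0$ lies in the fixed field of $\langle p\rangle$, namely the imaginary quadratic field $\mathbb{Q}(\sqrt{-L})$. Stickelberger's theorem determines the prime-ideal factorization of $G_0$ in $\mathbb{Q}(\zeta_L)$; descending to $\mathbb{Q}(\sqrt{-L})$ together with $|G_0|^2=p^{(L-1)/2}$ forces $G_0=P^{(1)}\tfrac{a+b\sqrt{-L}}{2}$ up to a root of unity, where the normalization $a^2+Lb^2=4p^{h_L}$ reflects the class number $h_L$ through the principal ideal generated by the relevant prime power. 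The congruence $a\equiv -2p^{(L-1+2h_L)/4}\pmod L$, combined with the Stickelberger congruence for $G_0$ modulo the prime above $L$, then removes the sign ambiguity. Finally the Davenport--Hasse lifting relation gives $G=(-1)^{k-1}G_0^{\,k}$, producing the factor $(-1)^{k-1}$ and the power $p^{\frac{k}{4}(L-1-2h_L)}$ in $P^{(k)}$ and the $k$-th power $\big(\tfrac{a+b\sqrt{-L}}{2}\big)^{k}$ defining $A^{(k)},B^{(k)}$.

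I expect the main obstacle to be this last sign-and-root-of-unity determination for the base Gauss sum $G_0$: everything preceding it is orthogonality and elementary character-sum identities, and the lift via Davenport--Hasse is mechanical, but pinning down the exact unit in $\mathbb{Q}(\sqrt{-L})$ requires the full force of Stickelberger's congruence and the arithmetic of the class group, which is where all the delicacy of the statement resides.
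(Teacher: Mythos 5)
You should know at the outset that the paper does not prove this lemma at all: it is imported from the literature, introduced with the sentence ``The results below are from \cite{BM73} or \cite{D-Y12}.'' So there is no in-paper argument to compare yours against; the honest comparison is with the method underlying those references, and your outline is exactly that classical route --- orthogonality to pass from periods to Gauss sums, the Frobenius invariance $G(\chi^p)=G(\chi)$ plus the index-$2$ hypothesis to reduce everything to one sum $G$ and its conjugate, the quadratic Gauss sum $\sum_j\lf{j}{L}\zeta_L^{j}=\sqrt{-L}$ to separate real and imaginary parts, and then Stickelberger, the class-number normalization $a^2+Lb^2=4p^{h_L}$, and Davenport--Hasse to evaluate $G$ itself. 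Your steps (i)--(ii) are correct and complete as written (note only that you quietly strengthen ``$p$ is a quadratic residue mod $L$'' to ``$\langle p\rangle$ is the full group of squares,'' which is indeed the intended index-$2$ hypothesis).

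As a proof, however, the proposal has two genuine gaps. First, the entire content of the lemma sits in the step you defer: the evaluation $G_0=\pm\, p^{(L-1-2h_L)/4}\,\frac{a+b\sqrt{-L}}{2}$ --- descent of $G_0$ to $\mathbb{Q}(\sqrt{-L})$, the Stickelberger factorization, the unit argument (valid since $L>3$), and removal of the sign via $a\equiv-2p^{(L-1+2h_L)/4}\pmod{L}$ --- is described but not carried out, so nothing beyond the elementary reductions has actually been established. Second, there is a concrete sign problem in your matching step that you should not pass over: carrying out your own computation with $\chi(\gamma)=\zeta_L$, detection by $\zeta_L^{-ju}$, and $\lf{-1}{L}=-1$ gives, for $\lf{u}{L}=1$,
\begin{equation*}
\eta_u^{(L,r)}=\frac{-1}{L}\bigl(\mathrm{Re}(G)-\sqrt{L}\,\mathrm{Im}(G)+1\bigr),
\end{equation*}
so matching the stated table (together with $\mathrm{Re}(G)=P^{(k)}A^{(k)}$ from $\eta_0$) forces $\mathrm{Im}(G)=-P^{(k)}B^{(k)}\sqrt{L}$, i.e. $G=P^{(k)}\bigl(\frac{a-b\sqrt{-L}}{2}\bigr)^{k}$ --- the \emph{conjugate} of the identity you set out to prove. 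Moreover, this sign cannot be pinned down from the listed hypotheses alone: replacing $\gamma$ by $\gamma^{w}$ with $\lf{w}{L}=-1$ replaces $\chi$ by a nonresidue power, hence conjugates $G$ and swaps $\eta_1\leftrightarrow\eta_{-1}$, while $a,b,P^{(k)},A^{(k)},B^{(k)}$ are untouched; so the displayed formulas can only hold up to this interchange (a looseness inherited from the cited sources). Your proof must therefore either fix a normalization of $\gamma$ (equivalently, of the sign of $b$) or state the conclusion up to conjugation; Stickelberger plus the congruence on $a$ alone will not resolve it, and this, rather than the ``mechanical'' parts, is where the proposal stops short of a proof.
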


\section{The weight distributions of this class of codes under certain conditions}\label{sec-main}

We first provide the following criterion that guarantees Condition iii) in the Main Assumptions.

\begin{lemma}\label{lem-hi}
(a) Suppose that for any proper factor $\ell$ of $m$ (i.e. $\ell \mid m$ and $\ell<m$) we have
$$\frac{r-1}{q^\ell - 1}\nmid N.$$
Then Condition iii) in the Main Assumptions holds.
\par (b) In particular, if $N \leqslant \sqrt{r}$, then Condition iii) in the Main Assumptions is met.
\end{lemma}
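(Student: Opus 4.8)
The plan is to translate Condition iii) into the language of $q$-cyclotomic cosets modulo $r-1$ and then to show that every way in which Condition iii) can fail forces $\frac{r-1}{q^{\ell}-1}\mid N$ for some proper divisor $\ell$ of $m$, contradicting the hypothesis in (a). First I would record that $\deg h_{a_i}$ equals the size of the $q$-cyclotomic coset of $a_i$ modulo $r-1$, i.e. the least $d>0$ with $(r-1)\mid a_i(q^d-1)$; since $\gamma^{-a_i}\in\mathbb{F}_r$ this $d$ divides $m$, and because $\gcd(r-1,q^{\ell}-1)=q^{\ell}-1$ for $\ell\mid m$, one gets $\deg h_{a_i}<m$ exactly when $\frac{r-1}{q^{\ell}-1}\mid a_i$ for some proper divisor $\ell\mid m$. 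Two elementary facts will be used repeatedly: that $\frac{r-1}{q^{\ell}-1}\mid\frac{r-1}{q-1}$ for every $\ell\mid m$ (since $\frac{r-1}{q-1}\big/\frac{r-1}{q^{\ell}-1}=\frac{q^{\ell}-1}{q-1}\in\mathbb{Z}$), and that $\gcd(r-1,q^{k}-1)=q^{\gcd(k,m)}-1$, whence $\gcd\!\big(\tfrac{r-1}{q^{\gcd(k,m)}-1},\tfrac{q^{k}-1}{q^{\gcd(k,m)}-1}\big)=1$.

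For the degree part of (a) I would argue by contradiction: suppose $\frac{r-1}{q^{\ell}-1}\mid a_i$ for a proper divisor $\ell\mid m$. Condition ii) gives $a_i\equiv a+\frac{r-1}{e}\Delta_i\pmod{r-1}$, so multiplying by $e$ yields $ea_i\equiv ea\pmod{r-1}$. Then $\frac{r-1}{q^{\ell}-1}\mid ea_i$, and since $\frac{r-1}{q^{\ell}-1}\mid(r-1)$ we also get $\frac{r-1}{q^{\ell}-1}\mid ea$. Combined with the first recorded fact this gives $\frac{r-1}{q^{\ell}-1}\mid\gcd\!\big(ae,\tfrac{r-1}{q-1}\big)=N$, a contradiction. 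Hence $\deg h_{a_i}=m$ for all $i$.

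For distinctness, suppose $h_{a_i}=h_{a_j}$ with $i\ne j$; then $\gamma^{-a_i}$ and $\gamma^{-a_j}$ are $\mathbb{F}_q$-conjugate, so $a_i\equiv a_jq^{k}\pmod{r-1}$ for some $0\le k<m$. Substituting the expressions for $a_i,a_j$ and rearranging leads to
\[
a(q^{k}-1)\equiv\tfrac{r-1}{e}\big(\Delta_i-\Delta_j q^{k}\big)\pmod{r-1}.\qquad(\ast)
\]
If $k=0$, reducing $(\ast)$ shows $e\mid(\Delta_i-\Delta_j)$, contradicting the hypothesis $\Delta_i\not\equiv\Delta_j\pmod{e}$ of Condition ii). If $k\ne0$, reducing $(\ast)$ modulo $\frac{r-1}{e}$ annihilates the right-hand side and yields $(r-1)\mid ea(q^{k}-1)$. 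Setting $d=\gcd(k,m)$, which is a proper divisor of $m$ because $1\le k\le m-1$, the second gcd fact above converts $(r-1)\mid ea(q^{k}-1)$ into $\frac{r-1}{q^{d}-1}\mid ea$, and hence $\frac{r-1}{q^{d}-1}\mid N$, again contradicting the hypothesis. This proves (a).

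Part (b) then follows from a size estimate: any proper divisor $\ell\mid m$ satisfies $\ell\le m/2$, so $\frac{r-1}{q^{\ell}-1}\ge\frac{q^{m}-1}{q^{m/2}-1}=\sqrt r+1>\sqrt r\ge N\ge1$, whence $\frac{r-1}{q^{\ell}-1}\nmid N$ and (a) applies. The one genuinely delicate step is the $k\ne0$ case of distinctness, where I must pass from $(r-1)\mid ea(q^{k}-1)$ to $\frac{r-1}{q^{d}-1}\mid ea$; this hinges precisely on the coprimality $\gcd\!\big(\tfrac{r-1}{q^{d}-1},\tfrac{q^{k}-1}{q^{d}-1}\big)=1$ recorded at the outset, without which the divisibility could be absorbed into the factor $q^{k}-1$ and the argument would break.
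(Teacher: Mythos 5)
Your proof is correct and follows essentially the same route as the paper: reduce the conjugacy relation modulo $\frac{r-1}{e}$ to get $(r-1)\mid ae(q^{k}-1)$, apply $\gcd(r-1,q^{k}-1)=q^{\gcd(k,m)}-1$ with the coprimality observation to deduce $\frac{r-1}{q^{\ell}-1}\mid ae$ and hence $\frac{r-1}{q^{\ell}-1}\mid N$, then settle (b) by the size bound $\frac{r-1}{q^{\ell}-1}\geqslant\sqrt r+1>N$. The only difference is organizational: you treat the degree and distinctness failures as separate cases and explicitly rule out $k=0$ via $\Delta_i\not\equiv\Delta_j\pmod e$, a step the paper leaves implicit when it asserts a \emph{positive} $h<m$ exists.
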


\begin{proof}
Suppose Condition iii) does not hold, then there exists a positive integer $h <m$ such that
$$a_iq^h \equiv a_j \pmod{r-1}$$
for some $1 \leqslant i, j\leqslant t$. Reducing modulo $(r-1)/e$ we obtain that
\begin{equation}\label{equ-qk}
aq^h \equiv a\pmod{\frac{r-1}{e}}.
\end{equation}
Hence $(r-1)\mid ae(q^h-1)$. Since $\gcd(r-1,q^h-1)=q^\ell-1$ where $\ell=\gcd(h,m)$, it then follows from
(\ref{equ-qk}) that
\begin{equation}\label{equ-qk2}
\left.\frac{r-1}{q^\ell-1}~\right| ~ae.\end{equation}
Hence
$$\left.\frac{r-1}{q^h-1}=\gcd \left( \frac{r-1}{q-1},\frac{r-1}{q^\ell-1} \right)~\right| ~\gcd \left( \frac{r-1}{q-1},ae\right)=N.$$
Since $\ell|m$ and $\ell<m$, this contradicts the condition of the lemma. Thus Part (a) is proved.

Part (b) of Lemma \ref{lem-hi} can be derived from Part (a) directly. For any proper factor $\ell$ of $m$, we have
$\ell \leqslant m/2$. Thus $\frac{r-1}{q^\ell-1}$ can not be a divisor of $N$ which is at most $\sqrt{r}$ because
$$\frac{r-1}{q^\ell -1}\geqslant \frac{r-1}{\sqrt{r}-1}=\sqrt{r}+1.$$
This completes the proof of Lemma \ref{lem-hi}.
\end{proof}

We now consider the weight distribution of the cyclic code $\mathcal{C}$ given in (\ref{def}). In order to find the Hamming weight of the codeword $c(x_1,\cdots,x_{t})$, it suffices to consider a new codeword $c'(x_1,\cdots,x_{t})$ given by
\[c'(x_1,\ldots,x_t)=\left(Tr_{r/q}\left(\sum_{j=1}^t x_j \gamma^{a_ji}  \right)\right)_{i=0}^{r-2},\]
because clearly $c'(x_1,\cdots,x_{t})$ is the codeword $c(x_1,\cdots,x_{t})$ repeating itself $\delta$ times and hence
\[w_H(c(x_1,\cdots,x_{t}))=\frac{w_H(c'(x_1,\cdots,x_{t}))}{\delta}. \]

Let $\psi_q(x)=\exp(2\pi\sqrt{-1}Tr_{q/p}(x)/p)$ be the canonical additive character of $\mathbb{F}_{q}$. Then $\psi=\psi_q\circ Tr_{r/q}$ is the canonical additive character of $\mathbb{F}_{r}$. Using the orthogonal relation (\ref{add-orth}), we know that the Hamming weight of the codeword $c'(x_1,\cdots,x_{t})$ is given by
\begin{eqnarray*}
\lefteqn{w_H(c'(x_1,\cdots,x_{t})) } \\
&=&r-1-\sum\limits_{i=0}^{r-2}\frac{1}{q}\sum\limits_{y\in
\mathbb{F}_{q}}\psi_q[yTr_{r/q}(x_1\gamma^{a_1i}+\cdots+x_{t}\gamma^{a_{t}i})]\\
&=&r-1-\frac{r-1}{q}-\frac{1}{q}\sum\limits_{y\in\mathbb{F}_{q}^*}
\sum\limits_{i=0}^{r-2}\psi[y\gamma^{ai} (x_1\gamma^{(a_1-a)i}+\cdots+x_{t}\gamma^{(a_{t}-a)i})]\\
&=&\frac{(r-1)(q-1)}{q}-\frac{1}{q}\sum\limits_{y\in\mathbb{F}_{q}^*}
\sum\limits_{i=0}^{r-2}\psi[y\gamma^{ai} (x_1\gamma^{\frac{r-1}{e}\Delta_1i}+\cdots+x_{t}\gamma^{\frac{r-1}{e}\Delta_{t}i})]
\end{eqnarray*}
{From Condition i) of the Main Assumptions, we know that $e\mid (r-1)$, hence we can write $i=ej+h$ for $0 \leqslant j\leqslant \frac{r-1}{e}-1$ and $0\leqslant h\leqslant e-1$. Denote
\begin{equation}\label{equ-beta-g}\beta_\tau=\gamma^{\frac{r-1}{e}\Delta_\tau} \mbox{\ for\ } 1 \leqslant  \tau \leqslant t, \mbox{ and } g=\gamma^{a}.\end{equation}
Hence}
\begin{eqnarray*}
\lefteqn{w_H(c'(x_1,\cdots,x_{t}))} \\
&=&\frac{(r-1)(q-1)}{q}-\frac{1}{q}\sum\limits_{y\in\mathbb{F}_{q}^*}\sum\limits_{j=0}^{\frac{r-1}{e}-1}\sum\limits_{h=0}^{e-1} \psi[y \gamma^{aej}\gamma^{ah} (x_1\beta_1^{h}+\cdots+x_{t}\beta_{t}^{h})].\\
&=&\frac{(r-1)(q-1)}{ q}-\frac{1}{q}\sum\limits_{l=0}^{q-2}\sum\limits_{j=0}^{\frac{r-1}{e}-1}\sum\limits_{h=0}^{e-1} \psi[\gamma^{N\{\frac{r-1}{N(q-1)}l+\frac{ae}{N}j\}}g^{h} (x_1\beta_1^{h}+\cdots+x_{t}\beta_{t}^{h})],
\end{eqnarray*}
where we defined $N=\gcd(\frac{r-1}{q-1},ae)$ in Section \ref{sec-codess}. For each $X \pmod{\frac{r-1}{N}}$, we consider the number of solutions $(l,j)$ with $0\leqslant l\leqslant q-2,\ 0\leqslant j\leqslant \frac{r-1}{e }-1$ such that
\begin{equation}\label{equ-lj}
   \frac{r-1}{N(q-1)}l+\frac{ae}{N}j\equiv X \pmod{\frac{r-1}{N}}.
\end{equation}
Reducing modulo $\frac{r-1}{N(q-1)}$, we find that
$$\frac{ae}{N}j\equiv X \pmod{\frac{r-1}{N(q-1)}}.$$
This has a unique solution for $j$ modulo $\frac{r-1}{N(q-1)}$, hence the number of $j$ for $0\leqslant j\leqslant \frac{r-1}{e}-1$ that satisfies the equation is
$$\frac{(r-1)/e }{(r-1)/N(q-1)}=\frac{N(q-1)}{e }.$$
{For each such solution $j$, returning to Equation (\ref{equ-lj}), we find}
$$l\equiv \frac{x-\frac{ae}{N}j}{(r-1)/N(q-1)}\pmod{q-1},$$
this means there is a unique such $l$ with $0\leqslant l\leqslant q-2$.
Therefore
\begin{eqnarray*}
\lefteqn{w_H(c'(x_1,\cdots,x_{t}))} \\
&=&\frac{(r-1)(q-1)}{q}-\frac{N(q-1)}{ eq}\sum\limits_{h=0}^{e-1}\sum\limits_{X=0}^{\frac{r-1}{N}-1} \psi[\gamma^{NX}g^{h}(\sum\limits_{\tau=1}^{t} x_{\tau} \beta_{\tau}^{h})]\\
&=&\frac{(r-1)(q-1)}{q}-\frac{N(q-1)}{ eq}\sum\limits_{h=0}^{e-1}\sum\limits_{z\in C_0^{(N,r)}} \psi[zg^{h} (\sum\limits_{\tau=1}^{t} x_{\tau} \beta_\tau^{h})]\\
&=&\frac{(r-1)(q-1)}{q}-\frac{N(q-1)}{ eq}\sum\limits_{h=0}^{e-1}
\bar\eta^{(N,r)}_{g^{h}\cdot\sum\limits_{\tau=1}^{t} x_\tau \beta_\tau^{h}}.
\end{eqnarray*}
Here we write $\bar\eta^{(N,r)}_{v}=\sum\limits_{z\in C_{0}^{(N,r)}}\psi(vz)$ for any $v\in\mathbb{F}_{r}$ and call these $\bar\eta^{(N,r)}_v$ the \textit{modified Gaussian periods}, since
$$\left\{
    \begin{array}{l}
     \bar\eta^{(N,r)}_0=\frac{r-1}{N}\\
     \bar\eta^{(N,r)}_{\gamma^{i}}=\eta_i^{(N,r)}\quad \hbox{ for $0\leqslant i\leqslant N-1$,}
    \end{array}
  \right.$$
where these $\eta_i^{(N,r)}$ are the classical Gaussian periods. We conclude that
\begin{eqnarray*}
w_H(c(x_1,\cdots,x_{t})) =\frac{(r-1)(q-1)}{q \delta}-\frac{N(q-1)}{ eq \delta}\sum\limits_{h=0}^{e-1}
\bar\eta^{(N,r)}_{g^{h}\cdot\sum\limits_{\tau=1}^{t} x_\tau \beta_\tau^{h}}.
\end{eqnarray*}

Thus, to compute the weight distribution of cyclic code $\mathcal{C}$, it suffices to compute the value distribution of the sum
\begin{equation}\label{equ-Tx}
T(x_1,\cdots ,x_{t}):=\sum\limits_{h=0}^{e-1}\bar\eta^{(N,r)}_{g^{h}
\cdot\sum_{\tau=1}^{t} x_\tau \beta_\tau^{h}}.\end{equation}
This is in general a difficult problem. We will deal with it for some special cases in the next two subsections.

\subsection{The case of $t=e \geqslant 2$}\label{sec-e}

In this case the set $\{\Delta_i : 1 \leqslant i \leqslant e\}$ is a complete residue system modulo $e$, so we may take $\Delta_1=0,\Delta_2=1,\cdots,\Delta_{e}=e-1$. Define $\beta:=\beta_2$, then $\beta=\gamma^{(r-1)/e}$ is an $e$-th root of unity in $\mathbb{F}_{r}$ and $\beta_i=\beta^{i-1}$ for $1 \leqslant i \leqslant t$. We now present a key observation, which enables us to count the frequency of the weights in a simple and clear way. Consider the
linear transform $\varphi:~\mathbb{F}_{r}^e \rightarrow \mathbb{F}_{r}^e$ given by
\begin{equation}\label{phi-transform1}
 \varphi\left(
          \begin{array}{l}
            x_1 \\
            x_2\\
            \vdots  \\
            x_{e} \\
          \end{array}
        \right)
=\left(\begin{array}{llll}
           1&1&\cdots&1 \\
           1&\beta&\cdots&\beta^{e-1} \\
           1&\beta^2&\cdots&\beta^{2(e-1)} \\
           \vdots &\vdots &&\vdots \\
           1&\beta^{e-1}&\cdots&\beta^{(e-1)^2}
         \end{array}
       \right)
\left(\begin{array}{l}
            x_1 \\
            x_2\\
            \vdots  \\
            x_{e} \\
          \end{array}\right)=
\left(\begin{array}{l}
            y_0 \\
            y_1\\
            \vdots  \\
            y_{e-1} \\
          \end{array}\right).
\end{equation}
Since $1,\beta,\beta^2,\cdots \beta^{e-1}$ are distinct, the Vandermonde matrix
\begin{equation}\label{matrix}A:=\left(
         \begin{array}{llll}
           1&1&\cdots&1 \\
           1&\beta&\cdots&\beta^{e-1} \\
           1&\beta^2&\cdots&\beta^{2(e-1)} \\
           \vdots &\vdots &&\vdots\\
           1&\beta^{e-1}&\cdots&\beta^{(e-1)^2}
         \end{array}
       \right)\end{equation}
 is invertible. We then have the following observation.
\\[2mm]
\textbf{Observation A}: \textit{The map $\varphi$ is an isomorphism from $\mathbb{F}_{r}^{e}$ to $\mathbb{F}_{r}^{e}$.} Then $y_0,\cdots ,y_{e-1}$ independently run over $\mathbb{F}_{r}$ as $x_1,\cdots ,x_e$ run over $\mathbb{F}_{r}$.\hfill$\blacksquare$
\vskip3mm

Observation A means that it suffices to study the value distribution of
$$\tilde{T}(y_0,\cdots ,y_{e-1}):=\sum\limits_{h=0}^{e-1}\bar\eta^{(N,r)}_{g^hy_h},\quad \forall(y_0,\cdots ,y_{e-1})\in \mathbb{F}_{r}^e.$$

\vskip2mm
\subsubsection{The subcase of $t=e$ and $N=1$}

When $N=1$, we have $e \delta \mid (q-1)$, $C_0^{(1,r)}=\langle\gamma\rangle=\mathbb{F}_{r}^*$, and
\begin{eqnarray*}
\bar\eta^{(1,r)}_{v}=
\left\{\begin{array}{ll}
 r-1, & \hbox{if $v=0$ } \\
 -1, & \hbox{if $v\in\mathbb{F}_{r}^*$.}
  \end{array}\right.
\end{eqnarray*}
Hence the value $\tilde{T}(y_0,\cdots ,y_{e-1})$ depends only on the total number of $i$'s such that $y_i =0$. Denote this number by $u$ where $0\leqslant u\leqslant e$. Then
$$\tilde{T}(y_0,\cdots ,y_{e-1})=
  \begin{array}{l}
    u(r-1)+(e-u)(-1)=ur-e,
  \end{array}
$$
and the number of times that  $\tilde{T}$ takes this value for such $(y_0,\ldots,y_{e-1})$'s is clearly $\binom{e}{u}(r-1)^{e-u}$.
Thus, we have the result below.

\begin{thm}\label{thm-e1}
Under the Main Assumptions, when $N=1$ and $e=t \geqslant 2$, the set $\mathcal{C}$ defined by (\ref{def}) is an $e$-weight $[n,tm,\frac{(q-1)r}{\delta eq}]$ cyclic code. The weight distribution of $\mathcal{C}$ is listed in Table \ref{Table2}.
\end{thm}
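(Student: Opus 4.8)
The plan is to convert the analysis already performed in the text into an explicit weight enumerator. By the computation preceding the statement, every codeword weight has the form
\[
w_H(c(x_1,\ldots,x_t))=\frac{(r-1)(q-1)}{q\delta}-\frac{N(q-1)}{eq\delta}\,T(x_1,\ldots,x_t),
\]
and in the present case $N=1$, $t=e$ the quantity $T$ equals $\tilde T(y_0,\ldots,y_{e-1})=ur-e$, where $u\in\{0,1,\ldots,e\}$ is the number of vanishing coordinates among $y_0,\ldots,y_{e-1}$. First I would substitute $N=1$ into the weight formula and plug in $\tilde T=ur-e$ to obtain, after simplification, a closed-form weight $w_u$ depending only on $u$; the constant term $e$ from $\tilde T$ combines neatly with the leading term since $e\delta\mid(q-1)$ guarantees divisibility, so $w_u$ is a genuine (integer) Hamming weight.

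Next I would read off the frequencies. By Observation A the map $\varphi$ is an isomorphism, so as $(x_1,\ldots,x_e)$ ranges over $\mathbb{F}_r^e$ the image $(y_0,\ldots,y_{e-1})$ ranges over all of $\mathbb{F}_r^e$ exactly once. Hence the number of codewords with exactly $u$ zero coordinates among the $y_h$ is $\binom{e}{u}(r-1)^{e-u}$, as already noted in the text. This yields $e+1$ values of $u$ from $0$ to $e$, but $u=e$ corresponds to $y_0=\cdots=y_{e-1}=0$, i.e.\ the all-zero codeword of weight $0$; thus the \emph{nonzero} weights are indexed by $u=0,1,\ldots,e-1$, giving the claimed $e$ distinct weights $A_{w_u}=\binom{e}{u}(r-1)^{e-u}$.

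To finish the parameter claims I would verify three things. The dimension $tm$ and the cyclic structure follow directly from Delsarte's Theorem and the Main Assumptions (in particular Condition iii) ensures $\deg h=tm$), exactly as stated in Section~\ref{sec-codess}. The minimum distance is the smallest nonzero $w_u$; since $w_u$ is a decreasing affine function of $u$ (because the coefficient of $T$ is negative and $\tilde T$ increases with $u$), the minimum weight occurs at $u=0$, and I would compute it explicitly to confirm it equals $\frac{(q-1)r}{\delta eq}$. Finally I would assemble the $(\text{weight},\text{frequency})$ pairs into Table~\ref{Table2}, checking the sanity identity $\sum_{u=0}^{e}\binom{e}{u}(r-1)^{e-u}=r^{e}$ so that the total codeword count is $r^{e}=r^{t}=q^{tm}$, matching the dimension $tm$.

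The computations here are entirely routine; the only mild obstacle is the bookkeeping in simplifying $w_u$ and confirming it is integral, which rests on the divisibility $e\delta\mid N(q-1)$ specialized to $N=1$. The genuinely substantive step—that $y_0,\ldots,y_{e-1}$ run independently and uniformly over $\mathbb{F}_r$—has already been established as Observation A, so the proof is essentially a matter of substitution and combinatorial counting rather than new analysis.
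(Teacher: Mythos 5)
Your overall route is exactly the paper's: the theorem is read off from the weight formula, Observation A, and the two values of $\bar\eta^{(1,r)}_v$, with frequency $\binom{e}{u}(r-1)^{e-u}$ for the number $u$ of vanishing $y_h$'s (the paper's Table \ref{Table2}, indexed by the number of \emph{nonzero} coordinates, is your count under the substitution $u\mapsto e-u$). However, there is one concrete error in your treatment of the minimum distance. With your convention ($u$ = number of zero coordinates), substituting $N=1$ and $\tilde T=ur-e$ into the weight formula gives
\[
w_u=\frac{(r-1)(q-1)}{q\delta}-\frac{(q-1)}{eq\delta}\,(ur-e)=\frac{(q-1)r}{\delta e q}\,(e-u),
\]
which, as you yourself note, is \emph{decreasing} in $u$. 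Hence the smallest nonzero weight is attained at the \emph{largest} admissible $u$, namely $u=e-1$ (the value $u=e$ being the zero codeword), giving $d=\frac{(q-1)r}{\delta eq}$, attained by $\binom{e}{e-1}(r-1)=e(r-1)>0$ codewords. Your claim that ``the minimum weight occurs at $u=0$'' contradicts the monotonicity you just invoked: at $u=0$ the weight is $\frac{(q-1)r}{\delta q}=e\cdot\frac{(q-1)r}{\delta eq}$, which is the \emph{maximum} weight, and your proposed explicit confirmation at $u=0$ would fail for every $e\geqslant 2$. This is an index slip rather than a flaw in the method: replacing $u=0$ by $u=e-1$ repairs it, and the rest of your argument (dimension via Delsarte's theorem and Condition iii, the frequency count via Observation A, and the sanity check $\sum_{u}\binom{e}{u}(r-1)^{e-u}=r^e=q^{tm}$) is correct and coincides with the paper's own proof.
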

\begin{table}[ht]
\caption{The weight distribution of $\mathcal{C}$ when $N=1$ and $e=t\geqslant 2$.}\label{Table2}
\begin{center}{
\begin{tabular}{|c|c|}
  \hline
  Weight & Frequency $\quad (0\leqslant u\leqslant e)$ \\\hline\hline
  $\frac{(q-1)r}{\delta eq}\cdot u$ & $\binom{e}{u}(r-1)^u$ times \\
  \hline
\end{tabular}}
\end{center}
\end{table}

\begin{exam}
Let $(q, m, e, t)=(3,3,2,2)$. Let $\gamma$ be the generator of $\gf_r^*$ with $\gamma^3 + 2\gamma + 1=0$.
Let $a=1$. Then $N=1$, $(a_1, a_2)=(1,14)$ and
$$
h_{a_1}(x)=x^3 + 2x^2 + 1, \ h_{a_2}(x)= x^3 + x^2 + 2.
$$
The parity-check polynomial of $\mathcal{C}$ is then
$
h(x)=x^6 + 2x^4 + 2x^2 + 2.
$
The code $\mathcal{C}$ is a $[26,6,9]$ ternary cyclic code with weight enumerator $1+52z^9+676z^{18}$.
\end{exam}

\subsubsection{The subcase of $t=e$ and $N\geqslant 2$}

In this case, we first give a general result stated in the following theorem.

\begin{thm}\label{thm-e2}
Suppose that the Gaussian periods $\eta_{i}^{(N,r)}$ of order $N$ have $\mu$ distinct values $\{\eta_1,\eta_2,\cdots ,\eta_{\mu}\}$, and each $\eta_i$ corresponds to $\tau_i$ cyclotomic classes for $1\leqslant i\leqslant \mu$. (Note that $\tau_1+\cdots +\tau_\mu=N$.) Then the cyclic code $\mathcal{C}$ defined in (\ref{def}) is an $[n,em]$ code over $\mathbb{F}_q$ with at most $\binom{\mu+e}{e}-1$ nonzero weights. Moreover, for any non-negative integers $u_0,u_1,\cdots,u_\mu$ such that $\sum_{j=0}^\mu u_j=e$, the weight distribution of $\mathcal{C}$ is listed in Table \ref{Table-e2}.
\end{thm}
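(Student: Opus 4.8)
The plan is to compute the full value distribution of the exponential sum
$$\tilde{T}(y_0,\ldots,y_{e-1})=\sum_{h=0}^{e-1}\bar\eta^{(N,r)}_{g^hy_h},$$
because by Observation A the tuple $(y_0,\ldots,y_{e-1})$ runs uniformly over $\mathbb{F}_r^e$, and the weight formula established just above the theorem expresses $w_H(c(x_1,\ldots,x_t))$ as an affine function of $\tilde{T}$. Hence the entire problem reduces to counting, for each admissible value of $\tilde{T}$, how many tuples produce it.

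First I would analyze a single summand $\bar\eta^{(N,r)}_{g^hy_h}$ as a function of $y_h\in\mathbb{F}_r$ with $h$ fixed. If $y_h=0$ the summand equals $\frac{r-1}{N}$, and this occurs for the single value $y_h=0$. If $y_h\ne 0$, then $v=g^hy_h\in\mathbb{F}_r^*$ and $\bar\eta^{(N,r)}_v=\eta^{(N,r)}_i$, where $C_i^{(N,r)}$ is the cyclotomic class containing $v$. The decisive point is that multiplication by $g^h=\gamma^{ah}$ is a bijection of $\mathbb{F}_r^*$: although it merely shifts the class index (sending $C_i^{(N,r)}$ to $C_{i+ah}^{(N,r)}$), the preimage of any fixed class still has size $\frac{r-1}{N}$. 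Grouping the $N$ classes by the $\mu$ distinct Gaussian-period values, the summand therefore takes the value $\eta_j$ for exactly $\tau_j\cdot\frac{r-1}{N}$ values of $y_h$, and crucially this count is the \emph{same for every} $h$.

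Next I would invoke the independence of $y_0,\ldots,y_{e-1}$ granted by Observation A. The value of $\tilde{T}$ depends only on the type vector $(u_0,u_1,\ldots,u_\mu)$ recording how many indices $h$ contribute $\frac{r-1}{N}$ (type $0$) and how many contribute each $\eta_j$ (type $j$), subject to $\sum_{j=0}^\mu u_j=e$. For such a vector one has $\tilde{T}=u_0\frac{r-1}{N}+\sum_{j=1}^\mu u_j\eta_j$, and substituting into the weight formula gives
$$w_H=\frac{q-1}{eq\delta}\sum_{j=1}^\mu u_j\bigl[(r-1)-N\eta_j\bigr],$$
which is the entry recorded in Table~\ref{Table-e2}. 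Since positions are independent and the per-type counts do not depend on $h$, the number of tuples realizing a fixed type vector factors as a multinomial choice of positions times the choices within each type,
$$\binom{e}{u_0,u_1,\ldots,u_\mu}\prod_{j=1}^\mu\left(\tau_j\cdot\frac{r-1}{N}\right)^{u_j},$$
a type-$0$ position admitting a single choice and a type-$j$ position admitting $\tau_j\frac{r-1}{N}$ choices.

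Finally, to bound the number of nonzero weights I would count type vectors: the nonnegative solutions of $u_0+\cdots+u_\mu=e$ number $\binom{\mu+e}{e}$ by stars and bars, and the unique vector $u_0=e$ (all $y_h=0$) yields $\tilde{T}=e\frac{r-1}{N}$ and hence the zero codeword, so at most $\binom{\mu+e}{e}-1$ nonzero weights remain. The reason the statement says ``at most'' rather than asserting equality is the genuine subtlety here: distinct type vectors can collide to one weight whenever $\sum_{j}u_j[(r-1)-N\eta_j]$ agrees, so no sharper count is claimed in general. The only other step demanding care is the $h$-independence of the per-type counts established in the second paragraph, since it is precisely this uniformity — multiplication by $g^h$ permuting the classes while preserving their sizes — that licenses the clean multinomial/product form and lets Observation A do its work.
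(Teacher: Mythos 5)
Your proposal is correct and follows essentially the same route as the paper: both reduce via Observation A to the value distribution of $\tilde{T}$, classify tuples by the type vector $(u_0,u_1,\ldots,u_\mu)$, and obtain the multinomial frequency $\frac{e!}{u_0!\cdots u_\mu!}\left(\frac{r-1}{N}\right)^{e-u_0}\prod_{j=1}^{\mu}\tau_j^{u_j}$. Your version merely spells out two points the paper leaves implicit — that multiplication by $g^h$ permutes the cyclotomic classes while preserving their sizes (so the per-type counts are independent of $h$), and that the type vector $u_0=e$ accounts for the zero codeword, giving the $\binom{\mu+e}{e}-1$ bound — which is a welcome clarification but not a different argument.
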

\begin{table}[ht]
\caption{The weight distribution of $\mathcal{C}$ when $e=t,N\geqslant 2$.}\label{Table-e2}
\begin{center}{
\begin{tabular}{|c|c|}
  \hline
  Weight & Frequency $\quad (\sum_{j=0}^\mu u_j=e)$ \\\hline\hline
  $\frac{(q-1)}{\delta eq}\sum\limits_{j=1}^\mu u_j (r-1-N\eta_j)$ & $\frac{e!}{u_0!u_1!\cdots u_\mu!}\left(\frac{r-1}{N}\right)^{e-u_0}\prod_{j=1}^{\mu}\tau_j^{u_j}$ times \\\hline
\end{tabular}}
\end{center}
\end{table}

\begin{proof}
We just need to compute the value distribution of $\tilde{T}(y_0,y_1,\cdots ,y_{e-1})$. By Observation A, $y_0,gy_1,\cdots ,g^{e-1}y_{e-1}$ run over each $C_i^{(N,r)}\ (0\leqslant i\leqslant N-1)$ independently and uniformly. Suppose among the $g^iy_i$'s, exactly $u_0$ of them takes on $0$ and $u_i$ of them correspond to $\tau_i$ cyclotomic classes with value $\eta_i$ for $1 \leqslant i \leqslant \mu$ respectively. Then $\tilde{T}(y_0,y_1,\cdots ,y_{e-1})$ has at most $\binom{\mu+e}{e}$ possible values. More precisely, it takes on the value
\[ u_0\bar{\eta}_0+\sum\limits_{j=1}^\mu u_j \eta_j=u_0\frac{r-1}{N}+\sum\limits_{j=1}^\mu u_j \eta_j.\]
with the frequency of
$$\binom{e}{u_0}\binom{e-u_0}{u_1}\binom{e-u_0-u_1}{u_2}\cdots \binom{u_{\mu-1}+u_\mu}{u_{\mu-1}}\left(\frac{r-1}{N}\right)^{e-u_0}\prod_{j=1}^{\mu}\tau_j^{u_j}\quad \mbox{times}.$$
Expanding the binomial coefficients, we obtain the desired conclusion.
\end{proof}

In theory, when $t=e$ and the Gaussian periods of order $N$ are known, by Theorem \ref{thm-e2} the weight distribution
of the cyclic code $\mathcal{C}$ might be formulated. However, the situation could be quite complicated when $e$ is large or the Gaussian periods have many different values. We list below some special cases in which the weight distribution can be obtained from Theorem \ref{thm-e2}.

If $N=\gcd(\frac{r-1}{q-1},ae)=2$, then $p,q,r$ are all odd and $2|m$. By Lemma \ref{lem-degree2}, the Gaussian periods of order $2$ take on two distinct values $\eta_1=\frac{-1+r^{1/2}}{2},\eta_2=\frac{-1-r^{1/2}}{2}$, each of which corresponds to $\tau_1=\tau_2=1$ cyclotomic class. Hence we have the following corollary.

\begin{cor}\label{thm-eN=2}
When $t=e$ and $N=2$, the cyclic code $\mathcal{C}$ of (\ref{def}) is an $[n,em]$ code over $\mathbb{F}_q$ with at most $\binom{e+2}{2}-1$ nonzero weights. Moreover, the weight distribution of $\mathcal{C}$ is listed in Table \ref{Table-eN=2}.
\end{cor}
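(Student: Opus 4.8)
The plan is to derive this corollary as a direct specialization of Theorem \ref{thm-e2} to the case $N=2$, so the only work is to identify the relevant Gaussian-period data and substitute. First I would record the structure of the order-$2$ Gaussian periods. As noted in the paragraph preceding the corollary, the hypothesis $N=\gcd(\frac{r-1}{q-1},ae)=2$ forces $p,q,r$ to be odd and $2\mid m$, so $sm$ is even. Consequently the factor $(\sqrt{-1})^{sm}$ appearing in Lemma \ref{lem-degree2} equals $(-1)^{sm/2}=\pm1$, and in either congruence case for $p$ the value $\eta_0^{(2,r)}$ is real and equals one of $\frac{-1\pm r^{1/2}}{2}$; since $\eta_1^{(2,r)}=-1-\eta_0^{(2,r)}$, it equals the other. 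Thus the Gaussian periods of order $2$ take exactly the two distinct values $\eta_1=\frac{-1+r^{1/2}}{2}$ and $\eta_2=\frac{-1-r^{1/2}}{2}$, and because $r^{1/2}\neq0$ these are genuinely distinct, with each value coming from exactly one cyclotomic class. In the notation of Theorem \ref{thm-e2} this reads $\mu=2$ and $\tau_1=\tau_2=1$. (Which cyclotomic class carries which value is immaterial, since by Observation A the quantities $g^iy_i$ run over the cyclotomic classes uniformly and independently, so only the multiset of period values matters.)

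With these values in hand, I would invoke Theorem \ref{thm-e2}. The bound on the number of nonzero weights becomes $\binom{\mu+e}{e}-1=\binom{e+2}{2}-1$, giving the first assertion. For the weight distribution, I would substitute $N=2$ and the two periods into the weight formula $\frac{q-1}{\delta e q}\sum_{j=1}^{\mu}u_j(r-1-N\eta_j)$ of Table \ref{Table-e2}. A short computation gives $r-1-2\eta_1=r-r^{1/2}$ and $r-1-2\eta_2=r+r^{1/2}$, so the weight corresponding to a triple $(u_0,u_1,u_2)$ with $u_0+u_1+u_2=e$ equals $\frac{q-1}{\delta e q}\bigl(u_1(r-r^{1/2})+u_2(r+r^{1/2})\bigr)$.

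For the frequency I would substitute $\tau_1=\tau_2=1$ into the frequency expression $\frac{e!}{u_0!u_1!\cdots u_\mu!}\bigl(\frac{r-1}{N}\bigr)^{e-u_0}\prod_{j=1}^{\mu}\tau_j^{u_j}$ of the same table, which collapses to $\frac{e!}{u_0!u_1!u_2!}\bigl(\frac{r-1}{2}\bigr)^{e-u_0}$ since $\tau_1^{u_1}\tau_2^{u_2}=1$ (recall $e-u_0=u_1+u_2$). Reading off these two columns for all admissible triples $(u_0,u_1,u_2)$ produces the entries of Table \ref{Table-eN=2}, completing the proof.

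I do not expect a genuine obstacle here: the argument is a specialization rather than a new computation, and all the analytic content already sits in Theorem \ref{thm-e2} and in the evaluation of the order-$2$ Gaussian periods. The only points requiring care are the verification that the order-$2$ periods are real and distinct under the standing hypothesis $N=2$ — i.e. confirming the parity of $sm$ and that $r^{1/2}\neq0$ — together with the routine bookkeeping needed to match the simplified weight and frequency expressions to the layout of Table \ref{Table-eN=2}.
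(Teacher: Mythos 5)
Your proposal is correct and follows essentially the same route as the paper, which derives Corollary \ref{thm-eN=2} exactly as you do: by specializing Theorem \ref{thm-e2} with $\mu=2$, $\tau_1=\tau_2=1$ and the order-$2$ Gaussian periods $\frac{-1\pm r^{1/2}}{2}$ from Lemma \ref{lem-degree2}. Your explicit check that $N=2$ forces $sm$ even (so the periods are real and distinct) is a detail the paper passes over silently, and the apparent interchange of $u_1$ and $u_2$ relative to Table \ref{Table-eN=2} is immaterial since the triples $(u_0,u_1,u_2)$ range over all admissible values, as you note.
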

\begin{table}[ht]
\caption{The weight distribution of $\mathcal{C}$ when $e=t,N=2$.}\label{Table-eN=2}
\begin{center}{
\begin{tabular}{|c|c|}
  \hline
  Weight & Frequency $(u_0+u_1+u_2=e)$ \\
  \hline
  \hline
  $\frac{(q-1)}{\delta eq}[u_1(r+\sqrt{r})+u_2(r-\sqrt{r})]$ & $\frac{e!}{u_0!u_1!u_2!}\left(\frac{r-1}{2}\right)^{u_1+u_2}$ times \\
  \hline
\end{tabular}}
\end{center}
\end{table}

We remark that Theorem 6 in \cite{Ding1} is a special case of Corollary \ref{thm-eN=2} with $e=t=N=2$.

\begin{exam}
Let $(q, m, e, t)=(7,2,2,2)$. Let $\gamma$ be the generator of $\gf_r^*$ with $\gamma^2 + 6\gamma + 3=0$.
Let $a=1$. Then $N=2$, $(a_1, a_2)=(1,25)$ and
$$
h_{a_1}(x)=x^2 + 2x + 5, \ h_{a_2}(x)= x^2 + 5x + 5.
$$
The parity-check polynomial of $\mathcal{C}$ is then
$
h(x)=x^4 + 6x^2 + 4.
$
The code $\mathcal{C}$ is a $[48,4,18]$ cyclic code over $\gf_7$ with weight enumerator
$
1 + 48z^{18} + 48z^{24}  + 576z^{36} + 1152z^{42} + 576z^{48}.
$
\end{exam}

\vskip3mm
If $N\mid (p^j+1)$ for some positive integer $j$, let $j$ be the least such and let $v=sm/2j$. From Lemma \ref{lem-semip}, the Gaussian periods of order $N$ take on two distinct values $\eta_1=\frac{-1-(-1)^{v}(N-1)r^{1/2}}{N},\eta_2=\frac{-1+(-1)^vr^{1/2}}{2}$, which correspond to $\tau_1=1$ and $\tau_2=N-1$ cyclotomic classes respectively. Hence we have the following corollary.

\begin{cor}\label{thm-eNs}
When $t=e$ and $N\mid (p^j+1)$ for some positive integer $j$, let $j$ be the least such and let $v=sm/2j$, Then the cyclic code $\mathcal{C}$ of (\ref{def}) is an $[n,em]$ code over $\mathbb{F}_q$ with at most $\binom{e+2}{2}-1$ nonzero weights. The weight distribution of $\mathcal{C}$ is listed in Table \ref{Table-eNs}.
\end{cor}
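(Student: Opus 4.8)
The plan is to obtain this corollary as a direct specialization of Theorem \ref{thm-e2} to the semiprimitive situation, so that essentially all the work reduces to identifying the number of distinct Gaussian periods of order $N$, their values, and their multiplicities, and then feeding these into the generic formula of Table \ref{Table-e2}. The argument therefore splits into two tasks: first, checking that the hypotheses of Lemma \ref{lem-semip} are genuinely met under the assumption $N \mid (p^j+1)$; and second, substituting the resulting period data into Theorem \ref{thm-e2}.

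For the first task I would pin down the multiplicative order of $p$ modulo $N$. Since $N \mid \frac{r-1}{q-1}$ we have $N \mid (r-1)$, hence $p^{sm}\equiv 1 \pmod N$. Assuming $N>2$ (the cases $N=1,2$ being covered by Theorem \ref{thm-e1} and Corollary \ref{thm-eN=2}), a short order computation using $p^j\equiv -1 \pmod N$ together with the minimality of $j$ shows that the order of $p$ modulo $N$ is exactly $2j$: indeed $p^{2j}\equiv 1$, while $-1\not\equiv 1 \pmod N$ rules out any divisor of $j$ being the order, and minimality of $j$ forces the order to equal $2j$ rather than a proper divisor. Then $2j \mid sm$, so $v=sm/2j$ is a positive integer and $r=p^{sm}=p^{2jv}$, which is precisely the input required by Lemma \ref{lem-semip}.

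For the second task I would read off from Lemma \ref{lem-semip} that the periods take exactly $\mu=2$ distinct values: one value $\eta_1=\frac{-1-(-1)^v(N-1)\sqrt{r}}{N}$ occurring on a single cyclotomic class ($\tau_1=1$), and the other value $\eta_2=\frac{-1+(-1)^v\sqrt{r}}{N}$ occurring on the remaining $N-1$ classes ($\tau_2=N-1$). A point worth verifying is that branches (a) and (b) of Lemma \ref{lem-semip} both collapse to this same pair: in case (a) the hypothesis forces $v$ odd, so $(-1)^v=-1$, and one checks that the case-(a) values $\frac{(N-1)\sqrt{r}-1}{N}$ and $\frac{-\sqrt{r}-1}{N}$ coincide with $\eta_1$ and $\eta_2$; hence no case split is needed and the index of the distinguished class is irrelevant, since Theorem \ref{thm-e2} depends only on the values and their multiplicities. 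Substituting $\mu=2$, $\tau_1=1$, $\tau_2=N-1$ into Theorem \ref{thm-e2}, and using the simplifications $r-1-N\eta_1=r+(-1)^v(N-1)\sqrt{r}$ and $r-1-N\eta_2=r-(-1)^v\sqrt{r}$, the weight $\frac{(q-1)}{\delta e q}\sum_{j=1}^{2}u_j(r-1-N\eta_j)$ and the frequency $\frac{e!}{u_0!u_1!u_2!}\bigl(\frac{r-1}{N}\bigr)^{u_1+u_2}(N-1)^{u_2}$ drop out, yielding Table \ref{Table-eNs}; the bound $\binom{e+2}{2}-1$ on the number of nonzero weights is immediate from $\binom{\mu+e}{e}-1$ with $\mu=2$.

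The routine part is the final substitution and the simplification of $r-1-N\eta_j$. The only genuinely delicate points are the two verifications above: that $v$ is an integer, equivalently that the order of $p$ modulo $N$ equals $2j$, and that the two branches of Lemma \ref{lem-semip} reduce to the same two-value distribution. I expect the order/integrality check to be the main obstacle, since it is the sole place where a hypothesis of Lemma \ref{lem-semip}, namely $r=p^{2jv}$ with $v\in\mathbb{Z}$, must be \emph{derived} from $N\mid(p^j+1)$ and $N\mid(r-1)$ rather than simply assumed.
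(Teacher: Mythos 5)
Your proposal is correct and takes exactly the paper's route: the paper obtains this corollary by reading off from Lemma \ref{lem-semip} that the order-$N$ Gaussian periods take $\mu=2$ distinct values with multiplicities $\tau_1=1$ and $\tau_2=N-1$, and then substituting into Theorem \ref{thm-e2}; your two ``delicate'' verifications (that $\mathrm{ord}_N(p)=2j$ so that $v=sm/2j$ is an integer, and that both branches of Lemma \ref{lem-semip} collapse to the same value pair, the class index being irrelevant) are exactly the steps the paper passes over silently. Note also that your frequency $\bigl(\tfrac{r-1}{N}\bigr)^{u_1+u_2}(N-1)^{u_2}$ is the correct one, consistent with Theorem \ref{thm-e2} and with the total count $r^e$; the factor $\bigl(\tfrac{r-1}{2}\bigr)^{u_1+u_2}$ printed in Table \ref{Table-eNs} is an evident typo.
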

\begin{table}[ht]
\caption{The weight distribution of $\mathcal{C}$ in semiprimitive case and $e=t$.}\label{Table-eNs}
\begin{center}{
\begin{tabular}{|c|c|}
  \hline
  Weight & Frequency $(u_0+u_1+u_2=e)$ \\
  \hline
  \hline
  $\frac{(q-1)}{\delta eq}[u_1(r+(-1)^v(N-1)\sqrt{r})+u_2(r-(-1)^v\sqrt{r})]$ & $\frac{e!}{u_0!u_1!u_2!}\left(\frac{r-1}{2}\right)^{u_1+u_2} (N-1)^{u_2}$ times \\
  \hline
\end{tabular}}
\end{center}
\end{table}

We remark that Theorems 7 and 8 in \cite{Ding2} is a special case of Corollary \ref{thm-eNs} with $e=t=2$.

\begin{exam}
Let $(q, m, e, t)=(5,2,3,3)$. Let $\gamma$ be the generator of $\gf_r^*$ with $\gamma^2 + 4\gamma + 2=0$.
Let $a=1$. Then $N=3$, $(a_1, a_2, a_3)=(1,9,17)$ and
$$
h_{a_1}(x)=x^2 + 2x + 3, \ h_{a_2}(x)= x^2 + 3, \ h_{a_3}(x)= x^2 + 3x+3.
$$
The parity-check polynomial of $\mathcal{C}$ is then
$
h(x)=x^6 + 2.
$
The code $\mathcal{C}$ is a $[24,6,4]$ cyclic code over $\gf_5$ with weight enumerator
$$
1+ 24z^4 + 240z^8+  1280z^{12} + 3840z^{16} + 6144z^{20} + 4096^{24}.
$$
\end{exam}

\vskip3mm
If $N=3$ and $p\equiv 1\pmod{3}$, then $3|m$. By Lemma \ref{lem-period2}, the Gaussian periods of order 3 take
on three distinct values $\eta_1=\frac{-1-c_1r^{1/3}}{3}, \eta_2=\frac{-1+\frac{1}{2}(c_1+9d_1)r^{1/3}}{3}, \eta_3=\frac{-1+\frac{1}{2}(c_1-9d_1)r^{1/3}}{3}$, each of which corresponds to $\tau_1=\tau_2=1$ cyclotomic class, where $c_1$ and $d_1$ are given by Lemma \ref{lem-period2}. Hence we have the following result.

\begin{cor}\label{thm-eN=3}
When $t=e$, $N=3$ and $p\equiv 1\pmod{3}$, the cyclic code $\mathcal{C}$ of (\ref{def}) is an $[n,em]$ code over $\mathbb{F}_q$ with at most $\binom{e+3}{3}-1$ nonzero weights. Moreover, the weight distribution of $\mathcal{C}$ is listed in Table \ref{Table-eN=3}, where $\eta_0,\eta_1,\eta_{2}$ are defined above.
\end{cor}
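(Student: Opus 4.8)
The plan is to obtain this as a direct specialization of Theorem~\ref{thm-e2}, using Lemma~\ref{lem-period2} to supply the Gaussian periods of order $N=3$. First I would check that the hypotheses of Lemma~\ref{lem-period2} are in force. The assumption already gives $p\equiv 1\pmod 3$, so it remains to verify $sm\equiv 0\pmod 3$. Since $N=3$ we have $3\mid\frac{r-1}{q-1}$, and as $q=p^s\equiv 1\pmod 3$ one computes $\frac{r-1}{q-1}=1+q+\cdots+q^{m-1}\equiv m\pmod 3$; hence $3\mid m$ and a fortiori $3\mid sm$. This is exactly the remark ``$3\mid m$'' preceding the corollary, and it licenses the use of Lemma~\ref{lem-period2}, which produces the three values $\eta_1,\eta_2,\eta_3$ displayed just above.

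Next I would feed these into Theorem~\ref{thm-e2} with $\mu=3$. Because the three expressions are attached to the three distinct cyclotomic classes of order $3$, each value corresponds to $\tau_1=\tau_2=\tau_3=1$ class. The bound on the number of nonzero weights is then immediate from Theorem~\ref{thm-e2}: at most $\binom{\mu+e}{e}-1=\binom{e+3}{3}-1$. For the weight distribution itself I would substitute $N=3$, $\mu=3$ and $\tau_j=1$ into the two columns of Table~\ref{Table-e2}. The weight entry collapses to $\frac{q-1}{\delta e q}\sum_{j=1}^{3}u_j\,(r-1-3\eta_j)$ and the frequency entry to $\frac{e!}{u_0!\,u_1!\,u_2!\,u_3!}\bigl(\frac{r-1}{3}\bigr)^{e-u_0}$, taken over all $(u_0,u_1,u_2,u_3)$ with $u_0+u_1+u_2+u_3=e$. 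These are precisely the entries of Table~\ref{Table-eN=3}, so the corollary follows.

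The only genuinely substantive point is that the three periods are pairwise distinct, which is what makes $\mu=3$ and justifies writing the table with separate parameters $u_1,u_2,u_3$. Distinctness of $\eta_2$ and $\eta_3$ is equivalent to $d_1\neq 0$, which I would deduce from the defining relation $4p^{sm/3}=c_1^2+27d_1^2$ together with $\gcd(c_1,p)=1$: if $d_1=0$ then $c_1^2=4p^{sm/3}$, forcing $p\mid c_1$ once $sm/3\geqslant 1$, a contradiction. Separating $\eta_1$ from $\eta_2,\eta_3$ reduces to $c_1\neq\pm 3d_1$; each of $c_1=\pm 3d_1$ would give $4p^{sm/3}=36d_1^2$, i.e. $p^{sm/3}=9d_1^2$, again contradicting $\gcd(c_1,p)=1$ together with $p\neq 3$. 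I expect this distinctness check to be the only mild obstacle, and it is the only place where the specific structure of Lemma~\ref{lem-period2} (as opposed to the generic machinery of Theorem~\ref{thm-e2}) is actually needed.
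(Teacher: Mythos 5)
Your proposal is correct and follows essentially the same route as the paper: the paper's justification is exactly the paragraph preceding the corollary, which notes that $N=3$ and $p\equiv 1\pmod 3$ force $3\mid m$, invokes Lemma~\ref{lem-period2} for the three distinct Gaussian periods of order $3$ (each attached to $\tau_j=1$ cyclotomic class), and then specializes Theorem~\ref{thm-e2} with $\mu=3$. Your write-up merely makes explicit two points the paper asserts without proof --- the congruence $\frac{r-1}{q-1}\equiv m\pmod 3$ giving $3\mid m$, and the pairwise distinctness of $\eta_1,\eta_2,\eta_3$ via $\gcd(c_1,p)=1$ --- both of which are verified correctly.
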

\begin{table}[ht]
\caption{The weight distribution of $\mathcal{C}$ when $e=t,N=3$.}\label{Table-eN=3}
\begin{center}{
\begin{tabular}{|c|c|}
  \hline
  Weight & Frequency $(u_0+u_1+u_2+u_3=e)$ \\
  \hline
  \hline
  $\frac{(q-1)}{\delta eq}\sum_{j=1}^3u_j(r-1-3\eta_j)$ & $\frac{e!}{u_0!u_1!u_2!u_3}\left(\frac{r-1}{3}\right)^{u_1+u_2+u_3}$ times \\
  \hline
\end{tabular}}
\end{center}
\end{table}

We remark that Theorem 9 in \cite{Ding2} is a special case of Corollary \ref{thm-eN=3} with $e=t=2$ and $N=3$.

\begin{exam}
Let $(q, m, e, t)=(7,3,3,3)$. Let $\gamma$ be the generator of $\gf_r^*$ with $\gamma^3 + 6\gamma^2 + 4=0$.
Let $a=1$. Then $N=3$, $(a_1, a_2, a_3)=(1,115,229)$ and
$$
h_{a_1}(x)=x^3 + 5x + 2, \ h_{a_2}(x)= x^3 +3x + 2, \ h_{a_3}(x)= x^3 + 6x+2.
$$
The parity-check polynomial of $\mathcal{C}$ is then
$
h(x)=x^9+6x^6 + 4x^3+1.
$
The code $\mathcal{C}$ is a $[342,9,90]$ cyclic code over $\gf_7$ with weight enumerator
\begin{eqnarray*}
1+ 342z^{90} + 342z^{96} + 342z^{108} + 38988^{180} + 77976z^{186}+
38988z^{192} + 77976z^{198} + \\
77976z^{204} + 38988z^{216} +
1481544z^{270} +
4444632z^{276} + 4444632z^{282} + 5926176^{288} + \\
8889264z^{294} + 4444632z^{300}
+ 4444632z^{306} + 4444632z^{312} + 1481544z^{324}.
\end{eqnarray*}
\end{exam}

\vskip3mm
If $3\neq N=\gcd(\frac{r-1}{q-1},ae)$ is a prime $\equiv 3\pmod{4}$, $p$ is a quadratic residue modulo $N$
and $\frac{N-1}{2}\mid sm$, let $k=\frac{2sm}{N-1}$, then, according to Lemma \ref{index2-period}, the Gaussian periods take on three values $\eta_1=\eta_0^{(N,r)},\eta_2=\eta_1^{(N,r)},\eta_3=\eta_{-1}^{(N,r)}$, which corresponds to $\tau_1=1$ and $\tau_2=\tau_3=(N-1)/2$ cyclotomic classes respectively. Hence we have the following corollary.

\begin{cor}\label{thm-eindex2}
If $t=e$, $3\neq N\equiv 3\pmod{4}$ is a prime, $p$ is a quadratic residue modulo $N$ and $\frac{N-1}{2}\mid sm$, let $k=\frac{2sm}{N-1}$. Then the cyclic code $\mathcal{C}$ defined in (\ref{def}) is an $[n,em]$ code with at most $\binom{e+3}{3}-1$ nonzero weights, and for each set $\{u_0,u_1,u_2,u_3\}$ of nonnegative integers with  $u_0+u_1+u_2+u_3=e$, the weight distribution of $\mathcal{C}$ is listed in the Table \ref{Table-eindex2}, where $\eta_1,\eta_2,\eta_{3}$ are defined above.
\end{cor}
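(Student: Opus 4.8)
The plan is to recognize this corollary as a direct specialization of Theorem \ref{thm-e2}, in which the generic value distribution of the Gaussian periods is supplied by the explicit index-$2$ formulas of Lemma \ref{Index2}. Accordingly, the first step is to check that the hypotheses of the corollary match exactly those required by Lemma \ref{Index2}: the modulus $N$ plays the role of $L$, we are given that $3 \neq N \equiv 3 \pmod{4}$ is prime, that $p$ is a quadratic residue modulo $N$, and that the divisibility $\frac{N-1}{2} \mid sm$ forces $k = \frac{2sm}{N-1}$ to be precisely the positive integer satisfying $\frac{N-1}{2}\cdot k = sm$ in that lemma. Once this correspondence is in place, Lemma \ref{Index2} tells us exactly how the periods $\eta_i^{(N,r)}$ distribute into distinct values.

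The second step is to read off the multiplicities $\tau_j$. By Lemma \ref{Index2}, the value of $\eta_u^{(N,r)}$ for $u \neq 0$ depends only on the Legendre symbol $\lf{u}{N}$: it equals $\eta_1$ when $u$ is a quadratic residue and $\eta_{-1}$ when $u$ is a nonresidue, while the class $u=0$ gives the separate value $\eta_0^{(N,r)}$. Since $N$ is an odd prime, among the indices $1,\ldots,N-1$ there are exactly $\frac{N-1}{2}$ quadratic residues and $\frac{N-1}{2}$ nonresidues. Hence, setting $\eta_1 = \eta_0^{(N,r)}$, $\eta_2 = \eta_1^{(N,r)}$, $\eta_3 = \eta_{-1}^{(N,r)}$, we obtain $\mu = 3$ distinct values with $\tau_1 = 1$ and $\tau_2 = \tau_3 = \frac{N-1}{2}$, and the consistency check $\tau_1 + \tau_2 + \tau_3 = N$ is immediate.

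The final step is simply to substitute $\mu = 3$ together with these multiplicities into Theorem \ref{thm-e2}. This yields at once the bound $\binom{e+3}{3}-1$ on the number of nonzero weights, and transcribing the Weight/Frequency entries of Table \ref{Table-e2} under the constraint $u_0 + u_1 + u_2 + u_3 = e$ produces the entries recorded in Table \ref{Table-eindex2}. The only point demanding any care — and the nearest thing to an obstacle — is the bookkeeping in the second step: one must be certain that the Legendre-symbol dichotomy of Lemma \ref{Index2} genuinely assigns the \emph{same} period value to all $\frac{N-1}{2}$ residue classes and to all $\frac{N-1}{2}$ nonresidue classes, so that the multiset of Gaussian periods really has the claimed threefold structure. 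Beyond this counting, the corollary follows mechanically from the already-established general theorem.
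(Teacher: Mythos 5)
Your proposal is correct and matches the paper's own derivation exactly: the paper also obtains this corollary by feeding the index-$2$ Gaussian-period values of Lemma \ref{Index2} (identifying $L=N$, $\mu=3$, $\tau_1=1$, $\tau_2=\tau_3=\frac{N-1}{2}$ via the Legendre-symbol dichotomy) directly into the general Theorem \ref{thm-e2}. The counting check you flag as the only delicate point is precisely the observation the paper makes in the paragraph preceding the corollary.
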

\begin{table}[ht]
\caption{The weight distribution of $\mathcal{C}$ in the case of index 2 and $e=t$.}\label{Table-eindex2}
\begin{center}{
\begin{tabular}{|c|c|}
  \hline
  Weight & Frequency $(u_0+u_1+u_2+u_3=e)$ \\\hline\hline
  $\frac{(q-1)}{q}[\frac{e-u_0}{\delta e}(r-1)-\frac{N}{\delta e}(u_1\eta_1+u_2\eta_2+u_3\eta_{3})]$ & $\frac{e!}{u_0!u_1!u_2!u_3}\left(\frac{r-1}{N}\right)^{e-u_0}\left(\frac{N-1}{2}\right)^{u_2+u_3}$ times \\\hline
\end{tabular}}
\end{center}
\end{table}

We remark that the main result in \cite{F-M12} is a special case of Corollary \ref{thm-eindex2} with $e=t=2$.

\begin{exam}
Let $(q, m, e, t)=(2,6,7,7)$. Let $\gamma$ be the generator of $\gf_r^*$ with
$\gamma^6 + \gamma^4 + \gamma^3 + \gamma + 1=0$.
Let $a=1$. Then $N=7$ and $p=2$, which is a quadratic residue modulo $N$.
In this case, $(a_1, a_2, a_3, a_4, a_5, a_6, a_7)=(1,10,19,28,37,46,55)$ and
\begin{eqnarray*}
&& h_{a_1}(x)=x^6 + x^5 + x^3 + x^2 + 1, \\
&& h_{a_2}(x)=x^6 + x^5 + 1, \\
&& h_{a_3}(x)=x^6 + x^5 + x^2 + x + 1, \\
&& h_{a_4}(x)=x^6 + x^3 + 1, \\
&& h_{a_5}(x)=x^6 + x^5 + x^4 + x + 1, \\
&& h_{a_6}(x)=x^6 + x + 1, \\
&& h_{a_7}(x)=x^6 + x^4 + x^3 + x + 1.
\end{eqnarray*}
The parity-check polynomial of $\mathcal{C}$ is then
$
h(x)=x^{42} + x^{21} + 1.
$
The code $\mathcal{C}$ is a $[63,42,2]$ cyclic code over $\gf_2$ with weight enumerator
\begin{eqnarray*}
1 + 63z^2 + 1890z^4 + 35910z^6 + 484785z^8 + 4944807z^{10} + 39558456z^{12} +  254304360z^{14} + \\  1335097890z^{16} +  5785424190z^{18} + 20827527084z^{20} + 62482581252z^{22} + \\
156206453130z^{24} + 324428787270z^{26} + 556163635320z^{28} + 778629089448z^{30} + \\
875957725629z^{32} + 772903875555z^{34} + 515269250370z^{36} +
244074908070z^{38} + \\
73222472421z^{40} + 10460353203z^{42}.
\end{eqnarray*}
\end{exam}

\subsection{The case of $2 \leqslant t < e$.}\label{sec-t}

In this section, we consider the case that $2 \leqslant t\leqslant e$. The $t$ zeros of the parity-check polynomial of
$\mathcal{C}$ are $\gamma^{-a_1}, \ldots,\gamma^{-a_t}$,
where $a_j \equiv a+\frac{r-1}{e}\Delta_j \pmod{r-1},\, 1 \leqslant j\leqslant t$. We may assume that $ 0 \leqslant \Delta_1< \Delta_2<\Delta_3<\cdots<\Delta_{t}\leqslant e-1$. Note that each $a_j$ corresponds to the $(\Delta_j+1)$-th column of the matrix $A$ defined in (\ref{matrix}). This is equivalent to choosing an $e\times t$ sub-matrix of $A$, denoted as $B$.
It is possible to choose these $\Delta_i$'s so that any $t$ rows of the matrix $B$ are linear independent over $\gf_q$.
The following lemma demonstrates one way of choosing such $\Delta_i$'s.

\begin{lemma}\label{lem-xyang}
Let $2 \leqslant t \leqslant  e$. Collect any $t$ consecutive columns (modulo $e$) of $A$ defined in (\ref{matrix}) to form matrix $B$. More specifically, for any $\rho$ such that $1\leqslant \rho\leqslant e$, collect the $\bar\rho$-th,$(\overline{\rho+1})$-th,$\cdots$,$(\overline{\rho+t-1})$-th columns of $A$ to form $B$, where $\bar i$ denotes the integer such that $1\leqslant \bar i\leqslant e$ and $\bar i\equiv i \pmod{e}$ for any integer $i$.  Then any $t$ rows of $B$ are $\mathbb{F}_{q}$-linear independent.
\end{lemma}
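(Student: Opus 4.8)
The plan is to reduce the claim to the nonvanishing of a genuine Vandermonde determinant over $\mathbb{F}_r$, from which $\mathbb{F}_q$-independence follows for free. First I would fix an arbitrary choice of $t$ rows of $B$, say the rows indexed $i_1 < i_2 < \cdots < i_t$ with $1 \leqslant i_a \leqslant e$, and form the associated $t \times t$ submatrix $M$. Writing $c = \rho - 1$, the chosen columns of $A$ are $\overline{\rho},\overline{\rho+1},\ldots,\overline{\rho+t-1}$, and since $\beta^e = 1$ the reduction of each column index modulo $e$ leaves every entry unchanged; hence the $(a,b)$-entry of $M$ is $\beta^{(i_a-1)(c+b-1)}$ for $1 \leqslant a,b \leqslant t$.

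The key observation is that \emph{consecutive} columns turn $M$ into a diagonal matrix times an honest Vandermonde matrix. Setting $n_a = i_a - 1$ and factoring $\beta^{n_a(c+b-1)} = \beta^{n_a c}\,(\beta^{n_a})^{b-1}$, I would write $M = D V$, where $D = \mathrm{diag}(\beta^{n_1 c},\ldots,\beta^{n_t c})$ and $V$ is the Vandermonde matrix on the nodes $\beta^{n_1},\ldots,\beta^{n_t}$. Then $\det M = \det D \cdot \det V = \left(\prod_{a=1}^{t}\beta^{n_a c}\right)\prod_{1\leqslant a<a'\leqslant t}(\beta^{n_{a'}}-\beta^{n_a})$.

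To finish I would show this determinant is nonzero in $\mathbb{F}_r$. The diagonal factor is a product of powers of $\beta$, hence a nonzero element. For the Vandermonde factor I would invoke that $\beta = \gamma^{(r-1)/e}$ has multiplicative order exactly $e$, so $\beta^{n} = \beta^{n'}$ forces $n \equiv n' \pmod{e}$; since $0 \leqslant n_1 < \cdots < n_t \leqslant e-1$ are pairwise distinct modulo $e$, the nodes $\beta^{n_a}$ are pairwise distinct and every factor $\beta^{n_{a'}}-\beta^{n_a}$ is nonzero. Thus $\det M \neq 0$, so the chosen $t$ rows are $\mathbb{F}_r$-linearly independent. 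Finally, because $\mathbb{F}_q \subseteq \mathbb{F}_r$, any nontrivial $\mathbb{F}_q$-linear relation among these rows would in particular be a nontrivial $\mathbb{F}_r$-linear relation, so $\mathbb{F}_r$-independence yields the asserted $\mathbb{F}_q$-independence.

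The computation is elementary once this factorization is in place; the only substantive point, and precisely the reason the hypothesis of consecutive columns is essential, is that the consecutive exponents $c, c+1, \ldots, c+t-1$ form an arithmetic progression of step $1$, which is exactly what produces the clean powers $(\beta^{n_a})^{b-1}$ after pulling out the diagonal. For non-consecutive columns one would instead obtain a generalized Vandermonde matrix of the form $(\beta^{n_a m_b})$, whose determinant can vanish over a finite field, so the argument would genuinely fail. I would therefore present the reduction to the Vandermonde form as the crux of the proof rather than as a routine manipulation.
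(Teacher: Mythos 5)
Your proposal is correct and follows essentially the same route as the paper: both factor the chosen $t\times t$ submatrix as a diagonal matrix of powers of $\beta$ times a genuine Vandermonde matrix on the distinct nodes $\beta^{n_1},\ldots,\beta^{n_t}$, and conclude invertibility from the distinctness of $1,\beta,\ldots,\beta^{e-1}$. Your write-up is slightly more careful on two points the paper leaves implicit --- the reduction of column indices modulo $e$ via $\beta^e=1$, and the passage from $\mathbb{F}_r$-independence to $\mathbb{F}_q$-independence --- but these are refinements of the same argument, not a different one.
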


\begin{proof}
For $0\leqslant i_1<i_2<\cdots <i_t\leqslant e-1$, suppose $B(i_1,\cdots ,i_t)$ is the $(t\times t)$-matrix constituted by the $i_1$-th,$\cdots$,$i_t$-th rows of $B$. Then,
$$\begin{array}{rl}
    B(i_1,\cdots i_t)&=\left(
        \begin{array}{llll}
          \beta^{i_1\bar\rho} & \beta^{i_1(\overline{\rho+1})} & \cdots & \beta^{i_1(\overline{\rho+t-1})}\\
          \beta^{i_2\bar\rho} & \beta^{i_2(\overline{\rho+1})} & \cdots & \beta^{i_2(\overline{\rho+t-1})}\\
          \vdots  & \vdots & &\vdots \\
          \beta^{i_t\bar\rho} & \beta^{i_t(\overline{\rho+1})} & \cdots & \beta^{i_t(\overline{\rho+t-1})}\\
        \end{array}
      \right)\\
&=\left(
   \begin{array}{cccc}
     \beta^{i_1\rho} &  &  &  \\
      & \beta^{i_2\rho} &  &  \\
      &  & \ddots &  \\
      &  &  & \beta^{i_t \rho} \\
   \end{array}
 \right)
 \left(
   \begin{array}{llll}
     1&\beta^{i_1}&\cdots&\beta^{i_1(t-1)}\\
     1&\beta^{i_2}&\cdots&\beta^{i_2(t-1)}\\
      \vdots&\vdots  &  & \vdots \\
     1&\beta^{i_t}&\cdots&\beta^{i_t(t-1)}\\
   \end{array} \right)
\end{array}$$
Since the last matrix in the above formula is a Vandermoned matrix and $1,\beta,\beta^2,\cdots ,\beta^{e-1}$ are nonzero and distinct, $B(i_1,\cdots ,i_t)$ is invertible. This completes the proof of the lemma.
\end{proof}


\vspace{3mm}
\subsubsection{The subcase of $2 \leqslant t \leqslant  e$ and $N=1$}

\begin{thm}\label{thm-tN=1}
Under the Main Assumptions, when $N=1$ and $2 \leqslant t \leqslant e$, and assume that any $t$ rows of the corresponding matrix $B$ are linearly independent. Then the weight distribution of the cyclic code $\mathcal{C}$ defined in (\ref{def}) is listed in Table \ref{Table3}. It is a $t$-weight $[n,tm,d]$ code with $d=\frac{(q-1)r}{\delta eq}(e-t+1)$.
\end{thm}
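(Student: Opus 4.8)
The plan is to specialize the general weight formula derived above to the case $N=1$ and then to recognize the resulting counting problem as computing the weight distribution of an MDS code. First I would substitute $N=1$ into
$$w_H(c(x_1,\ldots,x_t))=\frac{(r-1)(q-1)}{q\delta}-\frac{(q-1)}{eq\delta}\sum_{h=0}^{e-1}\bar\eta^{(1,r)}_{\,g^h\sum_{\tau=1}^{t}x_\tau\beta_\tau^h}.$$
Using $\bar\eta^{(1,r)}_0=r-1$ and $\bar\eta^{(1,r)}_v=-1$ for $v\neq 0$, together with $g^h\neq 0$, the inner sum equals $wr-e$, where $w$ is the number of indices $h\in\{0,\ldots,e-1\}$ for which $\sum_{\tau=1}^{t}x_\tau\beta_\tau^h=0$. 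A short simplification then yields
$$w_H(c(x_1,\ldots,x_t))=\frac{(q-1)r}{\delta eq}\,(e-w),$$
so the entire problem reduces to determining the distribution of $w$ as $(x_1,\ldots,x_t)$ runs over $\mathbb{F}_r^t$.

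Next I would interpret $w$ geometrically. Writing $x=(x_1,\ldots,x_t)$, the vector $\bigl(\sum_{\tau=1}^{t}x_\tau\beta_\tau^h\bigr)_{h=0}^{e-1}$ is precisely $Bx$, where $B$ is the $e\times t$ submatrix of $A$ from (\ref{matrix}) whose $(h,\tau)$ entry is $\beta_\tau^h$; hence $e-w$ is the Hamming weight of $Bx$. Since $B$ possesses $t$ linearly independent rows, $\mathrm{rank}\,B=t$, so $x\mapsto Bx$ is injective and its image $\mathcal{D}=\{Bx:x\in\mathbb{F}_r^t\}$ is an $[e,t]$ linear code over $\mathbb{F}_r$. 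The hypothesis that \emph{any} $t$ rows of $B$ are linearly independent forces every nonzero codeword of $\mathcal{D}$ to vanish on at most $t-1$ coordinates: if $Bx$ were zero on $t$ positions $i_1<\cdots<i_t$, then $B(i_1,\ldots,i_t)\,x=0$ with $B(i_1,\ldots,i_t)$ invertible (as in Lemma \ref{lem-xyang}) would give $x=0$. Thus $\mathcal{D}$ has minimum distance at least $e-t+1$, and by the Singleton bound it is an MDS $[e,t,e-t+1]$ code. Because $w_H(c(x))=\frac{(q-1)r}{\delta eq}\,w_H(Bx)$, the weight distribution of $\mathcal{C}$ is exactly that of $\mathcal{D}$ with every weight scaled by the factor $\frac{(q-1)r}{\delta eq}$.

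To finish, I would invoke the classical weight enumerator of an MDS $[e,t,e-t+1]$ code over $\mathbb{F}_r$: the number of codewords of Hamming weight $i$ (for $e-t+1\le i\le e$) is
$$A_i=\binom{e}{i}(r-1)\sum_{j=0}^{i-(e-t+1)}(-1)^j\binom{i-1}{j}r^{\,i-(e-t+1)-j}.$$
Reading this off, $\mathcal{C}$ carries exactly the $t$ nonzero weights $\frac{(q-1)r}{\delta eq}\,i$ for $e-t+1\le i\le e$, each with frequency $A_i$, matching Table \ref{Table3}; in particular $d=\frac{(q-1)r}{\delta eq}(e-t+1)$, where the smallest positive weight is taken because $x=0$ alone gives $Bx=0$. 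The algebra of the first two steps is routine; I expect the crux to be the clean identification of $\mathcal{D}$ as MDS (which is precisely what the independence hypothesis, supplied by Lemma \ref{lem-xyang}, is engineered to provide) followed by correctly transcribing the MDS frequencies. If one prefers a self-contained derivation of the $A_i$ rather than citing the enumerator, the main technical point becomes an inclusion–exclusion sieve: for each zero-position set $S\subseteq\{0,\ldots,e-1\}$ with $|S|\le t$ the solution space $\{x:(Bx)_h=0\ \forall h\in S\}$ has dimension $t-|S|$ by linear independence, and sieving over supersets extracts the count of $x$ with a prescribed weight. That bookkeeping is the only step requiring real care.
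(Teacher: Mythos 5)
Your proof is correct, and its counting step takes a genuinely different route from the paper's. The first half --- specializing to $N=1$ to get $w_H(c(\vec{x}))=\frac{(q-1)r}{\delta eq}\,w_H(B\vec{x})$ --- coincides with the paper's reduction. From there the paper works by hand: for each zero-position set $E\subset\{0,\ldots,e-1\}$ with $\#E=t-u$ it shows $\#N_E=r^u-1$ (intersections of $t$ or more of the hyperplanes $L_h$ are trivial because the $t\times t$ minors of $B$ are invertible), runs an inclusion--exclusion over the remaining positions to count the $\vec{x}$ whose zero set is exactly $E$, and multiplies by $\binom{e}{t-u}$ --- this is exactly the ``sieve'' you sketch as a fallback at the end of your proposal. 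Your main route instead identifies $\mathcal{D}=\{B\vec{x}:\vec{x}\in\mathbb{F}_r^t\}$ as an $[e,t]$ code over $\mathbb{F}_r$ with $d(\mathcal{D})\geqslant e-t+1$ by the row-independence hypothesis, hence MDS by the Singleton bound, and then cites the classical MDS weight enumerator. This buys concision and a conceptual explanation of what the hypothesis is really for: ``any $t$ rows of $B$ independent'' is precisely the MDS condition, and the weight distribution of $\mathcal{C}$ is just a rescaling of a universally known one. What the paper's self-contained computation buys is independence from outside results and a formula that lands directly in the shape of Table \ref{Table3}. One loose end in your write-up: the enumerator you quote is the $(r-1)\binom{i-1}{j}$ form, while Table \ref{Table3} (after substituting $i=e-t+u$) reads $\binom{e}{i}\sum_{k=0}^{u-1}(-1)^k\binom{i}{k}(r^{u-k}-1)$; asserting that these ``match'' silently invokes the standard Pascal-identity manipulation converting one form into the other, which deserves an explicit line.
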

\begin{table}[ht]
\caption{The weight distribution of $C$ when $N=1$ and $2 \leqslant t \leqslant e$.}\label{Table3}
\begin{center}{
\begin{tabular}{|c|c|}
  \hline
  Weight & Frequency $(1 \leqslant u\leqslant t)$ \\
  \hline
  \hline
  $\frac{(q-1)r}{\delta eq}\cdot (e-t+u)$ & $\binom{e}{t-u}\sum_{k=0}^{u-1}(-1)^k\binom{e-t+u}{k}
  (r^{u-k}-1)$ times \\
  \hline
  $0$ & once \\
  \hline
\end{tabular}}
\end{center}
\end{table}

\begin{proof}
It suffices to compute the value distribution of $T(\vec{x})$ for $\vec{x}=(x_1,\ldots,x_t) \in \mathbb{F}_r^t$.
For any $h$ with $1 \leqslant h \leqslant t$, define
\[L_h:=\left\{\vec{x}=(x_1,\ldots,x_t) \in \mathbb{F}_r^t: \sum_{i=1}^t x_i \beta_i^h=0 \right\},\]
and for any subset $E \subset \{0,1,\ldots,e-1\}$, define
\[\bar{E}:=\{0,1,\ldots,e-1\} \setminus E. \]
We also define
\[N_E:=\bigcap_{h \in E}L_h \setminus \{\vec{0}\}, \quad U_E:=\bigcup_{h \in E}L_h.\]
When $N=1$, the modified Gaussian periods have two possible values $\bar\eta^{(1,r)}_{v}=
\left\{\begin{array}{ll}
 r-1, & \hbox{if $v=0$;} \\
 -1, & \hbox{if $v\in\mathbb{F}_{r}^*$.}
  \end{array}\right.$ Hence by (\ref{equ-Tx}), for any $\vec{x} \in N_E \setminus U_{\bar{E}}$, we have
 \[T(\vec{x})=(\#E)(r-1)+(e-\#E)(-1)=(\#E)r-e. \]
So we only need to compute the order of the set $N_E \setminus U_{\bar{E}}$ for any $E \subset \{0,1,\ldots,e-1\}$ with $\#E$ fixed. Since any $t$ rows of $B$ are linearly independent, and $N_E$ is a vector space over $\mathbb{F}_r$ minus the origin, we have \[N_E=\emptyset \,\, \mbox{ if } \,\, \#E \geqslant t. \]
Now suppose $\#E=t-u$ for some $1 \leqslant u \leqslant t$, then $\#\bar{E}=e-t+u$, and we have $\#N_E=r^{u}-1$. For each $h \in \bar{E}$, for simplicity we define
\[E_h:=N_E \bigcap L_h=N_{E \bigcup \{h\}},\]
then clearly
\[N_E \bigcap U_{\bar{E}}=\bigcup_{h \in \bar{E}} \left(N_E \bigcap L_h\right)=\bigcup_{h \in \bar{E}} E_h.\]
It then follows from the inclusion-exclusion principle that
\[\#\left(N_E \bigcap U_{\bar{E}}\right)=\sum_{k=1}^{u}(-1)^{k+1}\left(
\sum_{\substack{i_1,\ldots,i_k \in \bar{E} \\
\mbox{\tiny distinct} }} \#\left(E_{i_1} \bigcap E_{i_2} \cdots \bigcap E_{i_k} \right)\right).\]
Since
\[\#\left(E_{i_1} \bigcap E_{i_2} \bigcap \cdots \bigcap E_{i_k} \right)=\#\left(N_{E \bigcup \{i_1,\ldots,i_k\}}\right)=r^{u-k}-1,\]
and $\#\bar{E}=e-t+u$, we have
\[\#\left(N_E \bigcap U_{\bar{E}}\right)=\sum_{k=1}^{u}(-1)^{k+1}
\binom{e-t+u}{k}(r^{u-k}-1).\]
We conclude that
\[\#\left(N_E-U_{\bar{E}}\right)=\#N_E-\#\left(N_E \bigcap U_{\bar{E}}\right)=\sum_{k=0}^{u-1}(-1)^{k}
\binom{e-t+u}{k}(r^{u-k}-1).\]
The number of subsets $E \subset \{0,1,\ldots,e-1\}$ such that $\#E=t-u$ is clearly $\binom{e}{t-u}$. This completes the proof of Theorem \ref{thm-tN=1}.
\end{proof}

\begin{rem}
(1). When $e=t$, it is easy to check that
\[\binom{e}{e-u}\sum_{k=0}^{u-1}(-1)^k
\binom{u}{k}(r^{u-k}-1)=\binom{e}{u}(r-1)^u.\]
This is consistent with Theorem \ref{thm-e1}.

(2). When $t=2, e\geqslant 2$, Theorem 5 in \cite{Ding1} is a special case of our Theorem \ref{thm-tN=1}.

(3). Lemma \ref{lem-xyang} justifies the usefulness of Theorem \ref{thm-tN=1}.


\end{rem}

\begin{exam}
Let $(q, m, e, t)=(5,3,4,3)$. Let $\gamma$ be the generator of $\gf_r^*$ with
$\gamma^3 + 3\gamma + 3=0$.
Let $a=1$ and $(\Delta_1, \Delta_2, \Delta_3)=(0, 1, 2)$. Then $N=1$,
$(a_1, a_2, a_3)=(1,32,63)$ and
\begin{eqnarray*}
h_{a_1}(x)=x^3 + x^2 + 2,  \
h_{a_2}(x)=x^3 + 3x^2 + 4, \
h_{a_3}(x)=x^3 + 4x^2 + 3.
\end{eqnarray*}
The parity-check polynomial of $\mathcal{C}$ is then
$
h(x)=x^9 + 3x^8 + 4x^7 + x^6 + x^5 + 4x^4 + x^3 + 2x^2+4.
$
The code $\mathcal{C}$ is a $[124,9,50]$ cyclic code over $\gf_5$ with weight enumerator
\begin{eqnarray*}
1+ 744z^{50} + 61008z^{75} + 1891372z^{100}.
\end{eqnarray*}
\end{exam}

\vskip3mm
\subsubsection{The subcase of $2 \leqslant t \leqslant  e$ and $N\geqslant 2$}

When $t\leqslant e$ and $N\geqslant 2$, the calculation is much more complicated, because there are more Gaussian periods to deal with, so a general result, like Theorem \ref{thm-e2}, could not be obtained. However, some special cases can be treated. Recently, \cite{Tang12} studied the codes in the case of $e=3,t=2,N=2$. They used the theory of elliptic curve. Here using the idea in this paper we give another simple proof, in which we only use results on cyclotomic numbers of order 2.

First, take $\Delta_1=0,\Delta_2=1$. The assumption $2=N=\gcd(\frac{r-1}{q-1},3a)$ implies that $p,q,r$ are odd and $2|a,2|m,2|\delta=\gcd(r-1,a,a+\frac{r-1}{3})$. Then $\beta=\gamma^{\frac{r-1}{3}},g=\gamma^{a},-1=\gamma^{\frac{r-1}{2}}$ all belong to $C_0^{(2,r)}$. Using the relation $$\begin{pmatrix}1&1\\1&\beta\\1&\beta^2\end{pmatrix}\begin{pmatrix}x_1\\x_2\end{pmatrix}
=\begin{pmatrix}y_0\\y_1\\y_2\end{pmatrix},
$$
we know that as $x_1,x_2$ run over $\mathbb{F}_r$, so do $y_0,y_1$, and $y_2=-\beta(y_0+\beta y_1)$. So, we just need to compute the value distribution of
\[ \tilde{T}(y_0,y_1,-\beta y_0-\beta^2 y_1)=\bar{\eta}_{y_0}^{(2,r)}+\bar{\eta}_{y_1}^{(2,r)}
+\bar{\eta}_{ y_0+\beta y_1}^{(2,r)}, \quad (y_0, \ y_1 \in \mathbb{F}_r). \]

If any two of $y_0,y_1,y_0+\beta y_1$ equal to 0, then all of them equal to 0.

If exact one of $y_0,y_1,y_0+\beta y_1$ equals to 0, then we have the following three situations
$$\begin{pmatrix}0\\y_1\\-\beta^2y_1\end{pmatrix},\begin{pmatrix}y_0\\0\\-\beta y_0\end{pmatrix} \mbox{~or~} \begin{pmatrix}-\beta y_1\\y_1\\0\end{pmatrix}.$$
So in this case $\tilde{T}(y_0,y_1,y_0+\beta y_1)$ has two possible values $\bar\eta_0+2\eta_0$ or $\bar\eta_0+2\eta_1$, each of which has frequency $3(r-1)/2$.

If none of $y_0,y_1,y_0+\beta y_1$ equals to 0. Substituting $\beta y_1/y_0$ with $y_1'$, we have
$$\begin{pmatrix}y_0\\y_1\\y_0+\beta y_1\end{pmatrix}=y_0\begin{pmatrix}1\\
\beta^{-1} y'_1\\1+y'_1\end{pmatrix}.$$
Since $y_1'$ and $\beta^{-1} y_1'$ belong to the same $C_i^{(2,r)}$, we have the values and frequencies of $\tilde{T}(y_0,y_1,y_0+\beta y_1)$ below, where the subscript $i=0,1$ are operated modulo 2.
$$    \begin{array}{lll}
Value& Conditions& Frequency\\[3mm]
      3\eta_0, & \hbox{when~}y_0\in C_0^{(2,r)},y'_1\in C_0^{(2,r)},1+y'_1\in C_0^{(2,r)};&\frac{(r-1)}{2}(0,0)^{(2,r)}\mbox{~times};\\[3mm]
      3\eta_1, & \hbox{when~}y_0\in C_1^{(2,r)},y'_1\in C_0^{(2,r)},1+y'_1\in C_0^{(2,r)};&\frac{(r-1)}{2}(0,0)^{(2,r)}\mbox{~times};\\[3mm]
      2\eta_0+\eta_1, & \hbox{when~}y_0\in C_0^{(2,r)},y'_1\in C_i^{(2,r)},1+y'_1\in C_{i+1}^{(2,r)},&\frac{(r-1)}{2}[(0,1)^{(2,r)}+(1,0)^{(2,r)}+(1,1)^{(2,r)}]\mbox{~times};\\
      & \hbox{\qquad or~}y_0\in C_1^{(2,r)},y'_1\in C_1^{(2,r)},1+y'_1\in C_1^{(2,r)};\\[3mm]
      \eta_0+2\eta_1, & \hbox{when~}y_0\in C_1^{(2,r)},y'_1\in C_i^{(2,r)},1+y'_1\in C_{i+1}^{(2,r)},&\frac{(r-1)}{2}[(0,1)^{(2,r)}+(1,0)^{(2,r)}+(1,1)^{(2,r)}]\mbox{~times};\\
      & \hbox{\qquad or~}y_0\in C_0^{(2,r)},y'_1\in C_1^{(2,r)},1+y'_1\in C_1^{(2,r)}.&\\
    \end{array}$$
Then by Lemma \ref{lem-cycNo-N=2}, we have the conclusion below.

\begin{thm}\label{thm-e3t2N2}
If  $e=3,t=2,N=2$, then the cyclic code $\mathcal{C}$ defined in (\ref{def}) is an $[n,2m,\frac{2(q-1)}{3 \delta q}(r-\sqrt{r})]$ code over $\mathbb{F}_q$ with 6 nonzero weights. The weight distribution of $\mathcal{C}$ is listed in Table \ref{Table-e3t2N2}.
\end{thm}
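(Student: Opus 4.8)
The plan is to assemble the value distribution of $\tilde T$ that has already been worked out in the case analysis above and push it through the weight formula
$$w_H(c(x_1,x_2))=\frac{(r-1)(q-1)}{q\delta}-\frac{2(q-1)}{3q\delta}\,T(x_1,x_2)$$
specialised to $N=2$, $e=3$. Since $(x_1,x_2)\mapsto(y_0,y_1)$ is a bijection on $\mathbb{F}_r^2$, the frequencies of the $\tilde T$-values computed over $(y_0,y_1)$ transfer verbatim to codeword frequencies, so the whole task reduces to converting each value of $T$ into a weight and copying over the counts.

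First I would record the order-$2$ Gaussian periods. The hypothesis $N=2=\gcd(\frac{r-1}{q-1},3a)$ forces $q$ odd and $m$ even, so $r=q^m$ is an odd perfect square; hence $r\equiv1\pmod4$ and $\sqrt r\in\mathbb{Z}$. Lemma \ref{lem-degree2} then gives $\{\eta_0^{(2,r)},\eta_1^{(2,r)}\}=\{\frac{-1+\sqrt r}{2},\frac{-1-\sqrt r}{2}\}$, together with $\bar\eta_0=\frac{r-1}{2}$. Substituting these into the seven value classes isolated above — the single all-zero class giving $T=3\cdot\frac{r-1}{2}$ and hence weight $0$; the two ``one zero'' classes $\bar\eta_0+2\eta_0$ and $\bar\eta_0+2\eta_1$; and the four ``none zero'' classes $3\eta_0$, $3\eta_1$, $2\eta_0+\eta_1$, $\eta_0+2\eta_1$ — a routine simplification produces the six nonzero weights $\frac{2(q-1)(r\mp\sqrt r)}{3q\delta}$, $\frac{(q-1)(r\mp\sqrt r)}{q\delta}$ and $\frac{(q-1)(3r\mp\sqrt r)}{3q\delta}$.

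For the frequencies I would carry over the counts already obtained — $\frac{3(r-1)}{2}$ for each one-zero weight, and $\frac{r-1}{2}(0,0)^{(2,r)}$ resp.\ $\frac{r-1}{2}[(0,1)^{(2,r)}+(1,0)^{(2,r)}+(1,1)^{(2,r)}]$ for the none-zero weights — and feed in the $r\equiv1\pmod4$ branch of Lemma \ref{lem-cycNo-N=2}, namely $(0,0)^{(2,r)}=\frac{r-5}{4}$ and $(0,1)^{(2,r)}=(1,0)^{(2,r)}=(1,1)^{(2,r)}=\frac{r-1}{4}$. This collapses the none-zero counts to $\frac{(r-1)(r-5)}{8}$ and $\frac{3(r-1)^2}{8}$, completing Table \ref{Table-e3t2N2}.

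There is no genuine conceptual obstacle here; what remains are three consistency checks. First, although Lemma \ref{lem-degree2} leaves the assignment of $\frac{-1\pm\sqrt r}{2}$ to $\eta_0^{(2,r)}$ versus $\eta_1^{(2,r)}$ dependent on $p\bmod4$ and $sm\bmod4$, this ambiguity is harmless: within each of the three pairs of classes the two members occur with equal frequency, so the multiset of (weight, frequency) pairs is invariant under $\eta_0\leftrightarrow\eta_1$. Second, I would confirm that the six weights are pairwise distinct and that $\frac{2(q-1)(r-\sqrt r)}{3q\delta}$ is the smallest, yielding the stated minimum distance. Third, I would verify $1+3(r-1)+\frac{2(r-1)(r-5)}{8}+\frac{6(r-1)^2}{8}=r^2$, which certifies that every codeword is counted exactly once. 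The most delicate point is simply justifying the $r\equiv1\pmod4$ branch of the cyclotomic numbers, and that follows at once from $r$ being an odd square.
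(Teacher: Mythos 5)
Your proposal is correct and follows essentially the same route as the paper: the paper's own proof of Theorem \ref{thm-e3t2N2} is precisely the case analysis preceding the theorem (the seven value classes of $\tilde T$ together with their counts in terms of cyclotomic numbers of order $2$), and what you add is the final assembly step the paper leaves implicit --- substituting the order-$2$ Gaussian periods from Lemma \ref{lem-degree2} and the $r\equiv 1\pmod 4$ branch of Lemma \ref{lem-cycNo-N=2} (correctly justified by $r$ being an odd square), plus the harmless-ambiguity, distinctness, and counting checks. One point worth flagging so your work is not mistaken for an error: your weights $\frac{(q-1)(3r\mp\sqrt r)}{3q\delta}$ for the classes $2\eta_0+\eta_1$ and $\eta_0+2\eta_1$ are the correct values, whereas the last two rows of Table \ref{Table-e3t2N2} as printed read $\frac{(q-1)}{q\delta}(3r\mp\sqrt r)$, omitting the factor $\frac{1}{3}$; the printed values would exceed the code length $n=(r-1)/\delta$ and contradict Example 16, where these two weights are $20$ and $22$, exactly as your formulas give.
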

\begin{table}[ht]
\caption{The weight distribution of $\mathcal{C}$ when $e=3,t=2,N=2$.}\label{Table-e3t2N2}
\begin{center}{
\begin{tabular}{|c|c|}
  \hline
  Weight & Frequency \\\hline\hline
  0&once\\\hline
  $\frac{2(q-1)}{3q \delta}(r-\sqrt{r})$ & $\frac{3}{2}(r-1)$ times \\\hline
  $\frac{2(q-1)}{3q \delta}(r+\sqrt{r})$ & $\frac{3}{2}(r-1)$ times \\\hline
  $\frac{(q-1)}{q \delta}(r-\sqrt{r})$ & $\frac{1}{8}(r-1)(r-5)$ times \\\hline
  $\frac{(q-1)}{q \delta}(r+\sqrt{r})$ & $\frac{1}{8}(r-1)(r-5)$ times \\\hline
  $\frac{(q-1)}{q \delta}(3r-\sqrt{r})$ & $\frac{3}{8}(r-1)^2$ times \\\hline
  $\frac{(q-1)}{q \delta}(3r+\sqrt{r})$ & $\frac{3}{8}(r-1)^2$ times \\\hline
\end{tabular}}
\end{center}
\end{table}

\begin{exam}
Let $(q, m, e, t)=(7,2,3,2)$. Let $\gamma$ be the generator of $\gf_r^*$ with
$\gamma^2 + 6\gamma + 3=0$.
Let $a=2$ and $(\Delta_1, \Delta_2)=(0, 1)$. Then $N=2$,
$(a_1, a_2)=(2,18),\ \delta=2,\ n=24$ and
\begin{eqnarray*}
h_{a_1}(x)=x^2 + 6x + 4,  \
h_{a_2}(x)=x^2 + 3x + 1.
\end{eqnarray*}
The parity-check polynomial of $\mathcal{C}$ is then
$
h(x)=x^4 + 2x^3 + 2x^2 + 4x + 4.
$
The code $\mathcal{C}$ is a $[24,4,12]$ cyclic code over $\gf_7$ with weight enumerator
\begin{eqnarray*}
1+ 72z^{12} + 72z^{16} + 264z^{18} + 864z^{20} + 864z^{22}  + 264z^{24}.
\end{eqnarray*}
\end{exam}

Note that for the case of $t=e-1\geqslant 3$, we have found a general method to count the frequency of
$\tilde{T}(y_0,\cdots ,y_{e-1})$. However, there are too many cases to consider and a lot of computation
is involved. We leave this case for future study.

\section{Conclusions}\label{sec-conclusion}

In this paper, we presented a class of cyclic codes $\mathcal{C}$ with arbitrary number of zeros. This
construction is an extension of earlier constructions (see for examples \cite{Ding1,Ding2,Vega12}). In addition,
we determined the weight distribution of $\mathcal{C}$ under the Main Assumptions for
the following special cases:
\begin{itemize}
\item  $t=e$ and the Gaussian periods of order $N$ are known, including the cases that $N=1,2,3$,
           semiprimitive case and a special index 2 case.
\item  $t \leqslant e$, $N=\gcd(\frac{r-1}{q-1},ae)=1$ and any $t$ rows of the matrix $B$ are linearly independent over
           $\mathbb{F}_{q}$.
\item  $t=2,e=3$ and $N=2$ (in this case, we gave a different and simple proof from the main result in \cite{Tang12}).
\end{itemize}

The weight distribution of the code $\mathcal{C}$ is still open in most cases when $t < e$. It would be
good if some of these open cases can be settled.

\subsection*{Acknowledgments}

Cunsheng Ding's and Maosheng Xiong's research are supported by the Hong Kong Research Grants Council under Grant Nos. 600812 and 606211, respectively.
Jing Yang's research  is partly supported by the National Natural Science Foundation of China (No. 10990011, 11001145, 61170289) and the Science and Technology on Information Assurance Laboratory Foundation (No. KJ-12-01).


\begin{thebibliography}{99}

\bibitem{AL06} Y. Aubry and P. Langevin, ``On the weights of binary irreducible cyclic codes,'' in: {\em Proceedings of the 2005 international conference on Coding and Cryptography}, LNCS 3969, Springer-Verlag, pp. 46--54, 2006.

\bibitem{BM72} L. D. Baumert and R. J. McEliece, ``Weights of irreducible cyclic codes,''  {\em Information and Control,} vol. 20, no. 2, pp. 158--175, 1972.

\bibitem{BM73} L. D. Baumert and J. Mykkeltveit, ``Weight distributions of some irreducible cyclic codes,'' {\em DSN Progress Report,} no. 16, pp. 128--131, 1973.

\bibitem{B-E-W} B. C. Berndt, R. J. Evans and K. S. Williams, {\em Gauss and Jacobi Sums,}  J. Wiley and Sons Company, New York, 1997.

\bibitem{B-M10} N. Boston and G. McGuire, ``The weight distributions of cyclic codes with two zeros and zeta
functions," {\em  J. Symbolic Comput.}, vol. 45, no. 7, pp. 723--733, 2010.

\bibitem{Delsarte} P. Delsarte, ``On subfield subcodes of modified Reed-Solomon codes," {\em IEEE Trans. Inform. Theory,}  vol. 21, no. 5, pp. 575--576, 1975.

\bibitem{Ding2}	C. Ding, Y. Liu, C. Ma and L. Zeng, ``The weight distributions of the duals of cyclic codes with two zeros," {\em IEEE Trans. Inform. Theory,} vol. 57, no. 12, pp. 8000--8006, 2011.

\bibitem{D-Y12} 	C. Ding and J. Yang, ``Hamming weights in irreducible cyclic codes," {\em Discr. Math.,} vol. 313, pp. 434--446,  2013.

\bibitem{FL08} K. Feng and J. Luo, ``Weight distribution of some reducible cyclic codes,'' {\em Finite Fields Appl.}, vol. 14, no. 2, pp. 390--409, 2008.

\bibitem{Feng12} T. Feng, ``On cyclic codes of length $2^{2^r}-1$ with two zeros whose dual codes have three weights," {\em Des. Codes Cryptogr.}, vol. 62,  pp. 253--258, 2012.


\bibitem{F-M12}	T. Feng and K. Momihara, ``Evaluation of the weight distribution of a class of cyclic codes based on index 2 Gauss sums," 2012, arXiv preprint.

\bibitem{Klov} T. Kl{\o}ve, {\em Codes for Error Detection,} Hackensack, NJ: World Scientific, 2007.

\bibitem{L-N} R. Lidl and H. Nuederreuter, {\em Finite Fields}, New York: Addision-Wesley, 1983.

\bibitem{LF08} J. Luo and K. Feng, ``On the weight distribution of two classes of cyclic codes,'' {\em IEEE Trans. Inform. Theory,} vol. 54, no. 12, pp. 5332--5344, Dec. 2008.

\bibitem{LTW} J. Luo, Y. Tang and H. Wang, ``Cyclic codes and sequences: the generalized Kasami case," {\em IEEE Trans. Inform. Theory,} vol. 56, no. 5, pp. 2130--2142, May 2010 .

\bibitem{Ding1} C. Ma, L. Zeng, Y. Liu, D. Feng and C. Ding, ``The weight enumerator of a class of cyclic codes," {\em IEEE Trans. Inform. Theory,}  vol. 57, no. 1, pp. 397--402, Jan. 2011.

\bibitem{McE74} R. J. McEliece, ``Irreducible cyclic codes and Gauss sums," in: {\em Combinatorics: Proc. NATO Advanced Study Inst., Breukelen, 1974}, Part 1: Theory of designs, Finite geometry and coding theory, Math. Centre Tracts, No. 55, Math. Centrum, Amsterdam, pp. 179--196, 1974.

\bibitem{McG} G. McGuire, ``On three weights in cyclic codes with two zeros," {\em Finite Fields Appl.,} vol. 10, no. 1, 97--104, 2004.

\bibitem{Mois07} M. Moisio, ``The moments of a Kloosterman sum and the weight distribution of a Zetterberg-type binary cyclic code," {\em IEEE Trans. Inform. Theory,} vol. 53, no. 2, pp. 843--847, Feb. 2007.

\bibitem{M-R07} M. Moisio and K. Ranto, ``Kloosterman sum identities and low-weight codewords in a cyclic code with two zeros," {\em Finite Fields Appl.,} vol. 13, no. 4, 922-935, 2007.


\bibitem{Myer} G. Myerson, ``Period polynomials and Gauss sums for finite fields,'' {\em Acta Arith.}, vol. 39,  pp. 251--264, 1981.

\bibitem{Rao10} A. Rao and 	N. Pinnawala, ``A family of two-weight irreducible cyclic codes," {\em IEEE Trans. Inform. Theory,} vol. 56, no. 6, pp. 2568--2570, June 2010.

\bibitem{Schoof} R. Schoof, ``Families of curves and weight distribution of codes," {\em Bull. Amer. Math. Soc.,} vol. 32, no. 2, 171--183, 1995.

\bibitem{Vega12} G. Vega, ``The weight distribution of an extended class of reducible cyclic codes," {\em IEEE Trans. Inform. Theory,} vol. 58, no. 7, pp. 4862--4869, July 2012.

\bibitem{Tang12} B. Wang, C. Tang, Y. Qi, Y. Yang and M. Xu, ``The weight distributions of cyclic codes and elliptic curves," {\em IEEE Trans. Inform. Theory,} vol. 58, no. 12, pp. 7253--7259, Dec. 2012.

\bibitem{Xiong1} M. Xiong, ``The weight distributions of a class of cyclic codes," {\em Finite Fields Appl.}, vol. 18, no. 5, pp. 933--945, 2012.

\bibitem{Xiong2} M. Xiong, ``The weight distributions of a class of cyclic codes II,'' to appear in Des. Codes Cryptogr.,

\bibitem{Y-C-D06} J. Yuan, C. Carlet, C. Ding, ``The weight distribution of a class of linear codes from perfect nonlinear functions," {\em IEEE Trans. Inform. Theory,}  vol. 52,  no. 2, pp. 712--717, Feb. 2006.

\bibitem{Zeng10} X. Zeng, L. Hu,  W. Jiang, Q. Yue, X. Cao, ``The weight distribution of a class of $p$-ary cyclic codes," {\em Finite Fields Appl.}, vol. 16, no. 1,  pp. 56--73, 2010.

\end{thebibliography}
\end{document}